 \documentclass[conference, onecolumn,11pt]{IEEEtran}
\IEEEoverridecommandlockouts

\usepackage{graphicx}
\usepackage{cite}
\usepackage{color}
\usepackage{amsfonts}
\usepackage{amsthm}
\usepackage{amsmath}
\usepackage{amssymb}
\sloppy
\usepackage{subfig}
\usepackage{fixltx2e}

\usepackage{pseudocode}
\usepackage[]{fullpage}
\graphicspath{{figs/}}
\newtheorem{theorem}{Theorem}
\newtheorem{lemma}{Lemma}
\newtheorem{remark}{Remark}
\newtheorem{definition}{Definition}

\newtheorem{example}{Example}

\newcommand{\prot}{\mathbf{P}}
\usepackage{mathrsfs}
\begin{document}

\title{{Coded Cooperative Data Exchange\\ for a Secret Key}
}

\author{
\IEEEauthorblockN{Thomas~A.~Courtade}
\IEEEauthorblockA{Dept. of Electrical Engineering and Computer Sciences\\University of California, Berkeley\\
              Email: courtade@eecs.berkeley.edu\vspace{-0ex}}
\and
\authorblockN{Thomas R. Halford}
\authorblockA{TrellisWare Technologies\\ San Diego, CA\\
Email: thalford@trellisware.com\vspace{-0ex}}
\thanks{This work is supported by in part by the NSF Center for Science of Information under grant agreement CCF-0939370.}
\thanks{This paper will appear in part at the 2014 International Symposium on Information Theory \cite{bib:PK_CCDE_ISIT2014}.}
}

\maketitle

\begin{abstract}
We consider a  coded cooperative data exchange problem with the goal of generating a secret key.  Specifically, we investigate the number of public transmissions required for a set of clients to agree on a secret key with probability one,  subject to the constraint  that it remains private from an eavesdropper.

Although the problems are closely related, we prove  that secret key generation with fewest number of linear transmissions is \textsf{NP}-hard, while it is known that the analogous problem in  traditional cooperative data exchange  can be solved in polynomial time. In doing this, we completely characterize the best-possible performance of linear coding schemes, and also prove that linear codes can be strictly suboptimal.  Finally, we extend the single-key results to characterize the minimum number of public transmissions required to generate a desired integer number of statistically independent secret keys.
\end{abstract}

\section{Introduction}\label{sec:intro}
In this paper, we consider a  cooperative data exchange problem with the goal of generating a secret key.  Primarily, we study the number of public transmissions required for a set of clients to agree on a secret key,  subject to the constraint  that it remains private from an eavesdropper.

In an asymptotic setting, the reciprocal relationship between secret key (SK) capacity and communication for omniscience was revealed in the pioneering work  \cite{bib:CsiszarNarayanIT2004} by Csisz\'{a}r and Narayan.  They showed that the maximum rate at which secrecy can be generated by a collection of terminals is in one-to-one correspondence with the minimum  rate required for those same terminals to communicate for omniscience.  Though they characterized the minimum communication rate required to attain omniscience, Csisz\'ar and Narayan left characterizing the minimum communication rate required to generate a maximum-rate SK  as an open problem \cite[Section VI]{bib:CsiszarNarayanIT2004}.  

In \cite{bib:ElRouayheb2010ITW}, El Rouayheb \emph{et al}.\ introduced a non-asymptotic, combinatorial version of Csisz\'{a}r and Narayan's communication for omniscience problem, which they called \emph{(coded) cooperative data exchange} (CCDE). Since its introduction, this problem has received significant attention from many researchers (see \cite{bib:CourtadeWesel2013} and the references therein). Algorithms and heuristics for solving the CCDE problem were presented in \cite{bib:SprintsonSadeghiISIT2010,bib:SprintsonQshine2010,bib:CourtadeWeselAllerton2010} for broadcast networks and in \cite{bib:CourtadeWeselAllerton2011,bib:GonenLangberg12} for multihop networks. Moreover, a number of authors have considered generalizations of the CCDE problem to model various practical system considerations \cite{bib:Milosavljevic2012,bib:Milosavljevic,bib:TajbakhshSadeghiShams2011,bib:OzgulSprintsonITA2011,bib:HouHsuSprintson13}.

In \cite{bib:CourtadeWeselAllerton2011, bib:CourtadeWesel2013}, Courtade and Wesel showed that the omniscience-secrecy relations in  \cite{bib:CsiszarNarayanIT2004} translate nicely to the combinatorial CCDE setting.  
Building on \cite{bib:CourtadeWeselAllerton2011,bib:CourtadeWesel2013}, we presently investigate the number of public transmissions required for a set of clients to agree on a SK with probability one, subject to the constraint  that it remains private from an eavesdropper.  In doing so, we address a combinatorial analog of  Csisz\'ar's open problem. On this note, we remark that independent of the work on CCDE, Chan considered a closely related finite linear source model and  gave suboptimal bounds on the public transmission block length required for perfect SK agreement \cite{chan2011linear}. Thus, our results give a definitive solution to Chan's problem under the CCDE model. 

Related to the present work is the minimum communication rate required to generate a maximum-rate SK in the asymptotic setting. In \cite{tyagi2013}, Tyagi gave a multi-letter expression characterization for this rate in the two-terminal case in terms of the \emph{r-rounds interactive common information}. Recently, Mukherjee and Kashyap considered extensions to the multi-terminal case \cite{bib:MukherjeeKashyap2014}. %

Despite the similarity in spirit, the asymptotic setting of \cite{tyagi2013}  and the combinatorial setting of the present paper are considerably different in nature, and the proof techniques used are orthogonal. That said, all of these results shed light on the fundamentally different natures of SK generation at minimum communication rate and SK generation via communication for omniscience.

The weakly secure CCDE problem introduced in \cite{yan2013algorithms} is also related to our work. The goal of the weakly secure CCDE problem is to communicate for omniscience while revealing as little information as possible to an eavesdropper. This is closely related to the CCDE under a privacy constraint problem studied in \cite{bib:CourtadeWesel2013}. Yan and Sprintson designed coding schemes that solve the weakly secure CCDE problem while revealing as little as information as possible to an eavesdropper \cite{yan2013algorithms}. Improvements to these schemes can be derived using the codes described in \cite{dau2014existence}. The primary distinction between the present setting and that of the weakly secure CCDE problem is that we only aim to generate a SK; we do not require that the nodes communicate for omniscience nor do we require that the SK corresponds to any given message.   

Finally, we remark that in a recent paper \cite{bib:HalfordCourtadeChugg2013}, Halford \emph{et al.}\ developed practical protocols for SK generation in ad hoc networks based on the CCDE problem. Briefly, a scenario was studied wherein the protocol designer controls the initial distribution of master keys so that secret keys can later be efficiently generated among arbitrary groups of clients. The results given in the present paper establish limits and suggest design rules for such protocols.

\subsection*{Our Contributions}

Given the close connection between CCDE and SK generation, we show two surprising results.  First, we prove that finding an optimal (linear) coding scheme for SK generation  is \textsf{NP}-hard, while it is known that the analogous CCDE problem is in \textsf{P}.   In doing this, we completely characterize the attainable performance for linear coding schemes in terms of hypergraph connectivity.  Second, despite linear codes being optimal for CCDE, we demonstrate that they can be strictly suboptimal for SK generation. Several ancillary results are also proved.%

This paper is organized as follows. Section \ref{sec:model}  formally defines our system model and reviews relevant results on CCDE.  In Section \ref{sec:Results}, we state and prove our main results for the generation of a single SK.  Section \ref{sec:multipleSK} characterizes the minimum number of public transmissions required to generate multiple SKs, and  Section \ref{sec:conclusion} delivers concluding remarks.

\section{System Model and Preliminaries}\label{sec:model}
We first establish  basic notation.
Throughout, we use calligraphic notation to denote sets. For two sets $\mathcal{A}\subset \mathcal{B}$, we write $\mathcal{B}\backslash \mathcal{A}$ to denote those elements in $\mathcal{B}$, but not in $\mathcal{A}$.  If $\mathcal{A}$ is a singleton set (i.e., $\mathcal{A}=\{a\}$), then we often use the notation $\mathcal{B}-a \triangleq \mathcal{B}\backslash \{a\}$ for convenience.  %
We define $\mathbb{Z}$ to be the set of integers.  For  positive $m\in \mathbb{Z}$, we use the shorthand notation $[m] \triangleq  \{1,2,\dots,m\}$.
Finally, for random variables $X,Y$,  we write $I(X;Y)$ for  the  mutual information between $X$ and $Y$.

\subsection{System Model}

Throughout, we consider networks defined by a set of $n$ clients (i.e., terminals) $\mathcal{C}=\{c_1,c_2,\dots,c_n\}$, a positive integer $m$, and a family of finite sets $\{\mathcal{I}_1, \mathcal{I}_2, \dots, \mathcal{I}_n\}$ (each $\mathcal{I}_j\subseteq [m]$ and  $\cup_{j=1}^n \mathcal{I}_j =[m]$) in the following way.  Define the random (column) vector $\underbar{X}\triangleq \left[ X_1,X_2,\dots, X_m\right]^T$, where each  $X_i$  is a discrete random variable with equiprobable distribution on a finite field $\mathbb{F}$, and $(X_1,X_2,\dots,X_m)$ are mutually independent\footnote{For technical reasons, we assume $|\mathbb{F}|>n$.}.  The random variables $\{X_i\}_{i=1}^m$ are called \emph{messages}, and $\{X_i : i\in \mathcal{I}_j\}$ is the set of messages initially held by client $c_j\in \mathcal{C}$.  In other words, $\mathcal{I}_j$ defines the indices of messages initially held by client $c_j$, for $j=1,\dots,n$. Throughout, $n$ will always denote the number of clients; since the sets $\mathcal{I}_j$ are always indexed by $j\in[n]$, we will use the shorthand notation $\{\mathcal{I}_j\}$ to denote the family $\{\mathcal{I}_1,\mathcal{I}_2, \dots,\mathcal{I}_n\}$.

We adopt the communication model which is standard in index coding and CCDE problems.  That is, we  consider transmission schemes consisting of a finite number of communication rounds.  In each round, a single client broadcasts an element of $\mathbb{F}$ (which can be a function of the messages initially held by that client and all previous transmissions) to all other clients over an error-free channel.  It is further assumed that all clients have knowledge of the index sets $\mathcal{I}_1, \dots, \mathcal{I}_n$, and thus follow a protocol which is mutually agreed upon.  We will elaborate on the definition of a transmission protocol in the next subsection.

 \subsection{Transmission Protocols Defined}
 
 For a network defined by $\{\mathcal{I}_j\}$, a transmission protocol $\prot$ (or simply, a protocol $\prot$) consisting of $t$ communication rounds is defined by $n$ encoding functions $\{\mathsf{f}_1,  \mathsf{f}_2, \dots,  \mathsf{f}_n\}$, and a $t$-tuple $(i_1,i_2,\dots,i_t)$, where $i_k\in [n]$ indicates which client transmits during communication round $k$.  More specifically, during communication round $k$, client $c_{i_k}$ transmits %
 \begin{align}
 \mathsf{f}_{i_k}\!\Big(\{X_j : j\in \mathcal{I}_{i_k} \},k, \{\mathsf{f}_{i_{\ell}}\}_{\ell=1}^{k-1} \Big) \in \mathbb{F}, \label{encFn}
 \end{align}
 where we have abbreviated the transmitted symbols in rounds $\ell \in[k-1]$ by $\{\mathsf{f}_{i_{\ell}}\}_{\ell=1}^{k-1}$.  For a given transmission protocol $\prot$ requiring $t$ communication rounds, we let $\mathbf{T}(\underbar{X},\prot) \in \mathbb{F}^{t}$  be the column vector with $k^{\mathsf{th}}$ entry  equal to $\mathsf{f}_{i_k}\!\left(\{X_j : j\in \mathcal{I}_{i_k} \}, k,\{\mathsf{f}_{i_{\ell}}\}_{\ell=1}^{k-1} \right)$.  Letting $\|\cdot \|$ be the length function, we have $\|\mathbf{T}(\underbar{X},\prot ) \| = t$.  Note that $\mathbf{T}(\underbar{X},\prot)$ is a random variable since it is a function of the random vector $\underbar{X}$.  Generally, the transmission protocol under consideration will be clear from context.  Hence, we abbreviate $\mathbf{T}(\underbar{X})\triangleq \mathbf{T}(\underbar{X},\prot)$ for convenience when there is no ambiguity.

A transmission protocol is said to be \emph{linear} (over $\mathbb{F}$) if the encoding functions  $\{\mathsf{f}_1,  \mathsf{f}_2, \dots, \mathsf{f}_n\}$ are of the form 
\begin{align}
 \mathsf{f}_{i_k}\!\left(\{X_j : j\in \mathcal{I}_{i_k} \},k, \{\mathsf{f}_{i_{\ell}}\}_{\ell=1}^{k-1}\right) = \sum_{ j} \alpha_j^{(k)} X_j, %
\end{align}
 where $\alpha_j^{(k)}\in \mathbb{F}$ can be interpreted as the encoding coefficient for message $j$ during communication round $k$.  In this case, we can express 
$\mathbf{T}(\underbar{X}) = A\underbar{X}$,
 where $A\in \mathbb{F}^{t \times m}$ assuming  the definitions $t \triangleq \|\mathbf{T}(\underbar{X}) \|$ and $m \triangleq |\cup_j \mathcal{I}_j|$.  Hence, the \emph{encoding matrix} $A$ provides a succinct description of a linear transmission protocol.  %
 Note that the order of transmissions corresponding to a linear protocol is inconsequential.
 
\subsection{Transmission Protocols for Omniscience} 
A transmission protocol $\prot$ is said to achieve \emph{omniscience} if there exist decoding functions $\{\mathsf{g}_1,  \mathsf{g}_2, \dots,  \mathsf{g}_n\}$ which satisfy
 \begin{align}
  \mathsf{g}_{j}\!\left(\{X_i : i\in \mathcal{I}_{j} \},   \mathbf{T}(\underbar{X},\prot)  \right) = \underbar{X} \mbox{~~for each $j\in[n]$}
 \end{align}
 with probability 1.

Before proceeding,  let $\mathsf{M}^{\star}\!\left(\{\mathcal{I}_j\}\right)$ denote the optimal value of the following integer linear program (ILP):
\begin{align}
\mathsf{minimize:}&~~ \sum_{j\in[n]} a_j \label{ILP}\\
\mathsf{subject~to:}&~~ \sum_{j\in\mathcal{S}} a_j  \geq \left|\bigcap_{j\in \bar{\mathcal{S}} } \bar{\mathcal{I}_j} \right| \mbox{~~for all nonempty $\mathcal{S}\subset [n]$}\notag\\
&~~ a_j \in \mathbb{Z} \mbox{~~for all $j\in [n]$,}\notag
\end{align}
where $\bar{\mathcal{I}_i}\triangleq \left( \cup_j \mathcal{I}_j \right)\backslash \mathcal{I}_i$ and $\bar{\mathcal{S}}\triangleq [n]\backslash \mathcal{S}$.
The quantity $\mathsf{M}^{\star}\!\left(\{\mathcal{I}_j\}\right)$ will play an important role in our treatment due to its inherent connection to the communication for omniscience, which is made explicit by the following theorem\footnote{Theorem \ref{thm:UR} essentially appeared in the given form in \cite{bib:CourtadeWeselAllerton2011}.  However, it was independently discovered by Milosavljevic et al. \cite{bib:Milosavljevic} and Chan \cite{chan2011linear} at roughly the same time.}. 
 \begin{theorem}\label{thm:UR} {\cite[Theorem 2]{bib:CourtadeWesel2013}} If a protocol $\prot$ achieves omniscience, then $\|\mathbf{T}(\underbar{X},\prot)\| \geq \mathsf{M}^{\star}\!\left(\{\mathcal{I}_j\}\right)$.  Conversely, there always exists a linear protocol $\prot_{\mathrm{L}}$ that achieves omniscience and has $\|\mathbf{T}(\underbar{X},\prot_{\mathrm{L}})\| = \mathsf{M}^{\star}\!\left(\{\mathcal{I}_j\}\right)$. \end{theorem}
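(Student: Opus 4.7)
The proof naturally splits into a converse and an achievability part, and I would handle the converse first, as it follows from a cut-set argument that makes the role of the ILP transparent.

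For the converse, suppose $\prot$ achieves omniscience and let $a_j$ denote the (non-negative integer) number of symbols transmitted by client $c_j$, so that $\|\mathbf{T}(\underbar{X},\prot)\| = \sum_{j\in[n]} a_j$. Fix a nonempty $\mathcal{S}\subsetneq[n]$ and set $\mathcal{T}_\mathcal{S} \triangleq \bigcap_{j\in\bar{\mathcal{S}}}\bar{\mathcal{I}_j}$, so that $X_{\mathcal{T}_\mathcal{S}}$ is exactly the block of messages held by no client in $\bar{\mathcal{S}}$. Let $\mathbf{T}_\mathcal{S}$ denote the transmissions made by clients in $\mathcal{S}$. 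Transmissions made by clients in $\bar{\mathcal{S}}$ are deterministic functions of $X_{\cup_{j\in\bar{\mathcal{S}}}\mathcal{I}_j}$ and prior transmissions, hence (by induction on the round index) of $X_{\cup_{j\in\bar{\mathcal{S}}}\mathcal{I}_j}$ together with $\mathbf{T}_\mathcal{S}$. Omniscience at any $c_j\in\bar{\mathcal{S}}$ then gives $H(X_{\mathcal{T}_\mathcal{S}}\mid X_{\cup_{j\in\bar{\mathcal{S}}}\mathcal{I}_j},\mathbf{T}_\mathcal{S})=0$. Because the messages are i.i.d.\ uniform on $\mathbb{F}$, I can chain
$|\mathcal{T}_\mathcal{S}|\log|\mathbb{F}| = H(X_{\mathcal{T}_\mathcal{S}}) = I(X_{\mathcal{T}_\mathcal{S}};\mathbf{T}_\mathcal{S}\mid X_{\cup_{j\in\bar{\mathcal{S}}}\mathcal{I}_j}) \leq H(\mathbf{T}_\mathcal{S}) \leq \bigl(\sum_{j\in\mathcal{S}}a_j\bigr)\log|\mathbb{F}|$,
yielding the ILP constraint $\sum_{j\in\mathcal{S}}a_j \geq |\mathcal{T}_\mathcal{S}|$. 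Hence $(a_1,\dots,a_n)$ is feasible for \eqref{ILP}, and $\|\mathbf{T}(\underbar{X},\prot)\|\geq \mathsf{M}^{\star}(\{\mathcal{I}_j\})$.

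For achievability, I would let $(a_1^*,\ldots,a_n^*)$ be an optimal integer solution of \eqref{ILP} and construct a linear protocol with encoding matrix $A\in\mathbb{F}^{\mathsf{M}^{\star}\times m}$ by allocating $a_j^*$ rows to client $c_j$, each supported on the columns indexed by $\mathcal{I}_j$. Client $c_j$ can recover $\underbar{X}$ from $A\underbar{X}$ and $\{X_i:i\in\mathcal{I}_j\}$ if and only if the submatrix $A^{(j)}$ formed by the rows of clients $i\neq j$ restricted to columns in $\bar{\mathcal{I}_j}$ has full column rank $|\bar{\mathcal{I}_j}|$. Taking $\mathcal{S}=[n]\setminus\{j\}$ in \eqref{ILP} shows that $A^{(j)}$ has at least $|\bar{\mathcal{I}_j}|$ rows, so dimension alone does not preclude full rank. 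I would then take the matroid-union viewpoint: each client $c_j$ contributes vectors from the vector matroid of $\mathbb{F}$-rows supported on $\mathcal{I}_j$, and the ILP constraints for every $\mathcal{S}$ are precisely the Edmonds matroid-union rank inequalities certifying that, generically, all $n$ submatrices $A^{(j)}$ can simultaneously attain full column rank. Equivalently, viewing the $|\bar{\mathcal{I}_j}|\times|\bar{\mathcal{I}_j}|$ minors of $A^{(j)}$ as polynomials in the free coefficients of $A$, feasibility of \eqref{ILP} ensures at least one minor per $j$ is a nonzero polynomial, and the assumption $|\mathbb{F}|>n$ lets the Schwartz–Zippel lemma (applied to the product of these $n$ polynomials) produce a single evaluation of the coefficients under which every $A^{(j)}$ is full column rank, completing the construction.

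The main obstacle is the achievability step: converting an integer ILP solution into an actual linear scheme requires more than counting rows, because the same $a_j^*$ rows contributed by client $c_j$ must simultaneously serve the decoding needs of \emph{every} other client. The nontrivial content is precisely that the matroid-union rank formula holds with inequality exactly when the ILP is feasible on all of its exponentially many constraints, and that the resulting non-vanishing polynomial has a root-free evaluation in the given field. This is where the strength of every ILP constraint and the size condition $|\mathbb{F}|>n$ are both essential; the converse, by contrast, is an entropy bookkeeping exercise and presents no real difficulty.
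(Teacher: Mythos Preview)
The paper does not prove this theorem; it is quoted as \cite[Theorem~2]{bib:CourtadeWesel2013} and used as a black box, so there is no in-paper argument to compare against.

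That said, your proposal is sound and follows the standard route taken in the cited reference. The converse is the usual cut-set entropy argument and is correct as written. For achievability, the key combinatorial claim---that ILP feasibility forces each $A^{(j)}$ to be generically full column rank---is also correct, and the Hall-type justification is short: for $T\subseteq\bar{\mathcal I}_j$ set $\mathcal S=\{i:\mathcal I_i\cap T\neq\emptyset\}$; then $j\notin\mathcal S$, $T\subseteq\bigcap_{k\in\bar{\mathcal S}}\bar{\mathcal I}_k$, and the ILP constraint for $\mathcal S$ gives exactly the matching condition $\sum_{i\in\mathcal S}a_i^\star\geq|T|$. The one loose thread is the field-size claim. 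A direct Schwartz--Zippel bound on the product of the $n$ minor polynomials has total degree $\sum_j|\bar{\mathcal I}_j|$, so it only guarantees a good evaluation when $|\mathbb F|$ exceeds that sum, not when $|\mathbb F|>n$. To match the paper's standing hypothesis $|\mathbb F|>n$ you would need the sharper sequential construction (in the spirit of the Jaggi--Sanders deterministic multicast algorithm, where each of the $n$ decoders imposes one forbidden hyperplane per coefficient choice) rather than a union bound; this is presumably the ``technical reason'' behind the footnote.
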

 
 Theorem \ref{thm:UR} addresses the central issue in the CCDE problem, which primarily investigates the number of transmissions required to achieve omniscience.  We remark that this is not equivalent to characterizing the minimum communication \emph{rate} required for omniscience (as would be the case in the original communication for omniscience problem \cite{bib:CsiszarNarayanIT2004}) due in part to the integrality constraint on the number of transmissions.

\subsection{Transmission Protocols for Secret Keys}
  
A transmission protocol (with corresponding transmission sequence $\mathbf{T}(\underbar{X})$) generates a secret key (SK) if there exist decoding functions $\{\mathsf{k}_1,\mathsf{k}_2,\dots, \mathsf{k}_n\}$ which satisfy the following three properties:
\begin{enumerate}
\item[(i)] For all $j\in [n]$, and with probability 1, 
$$
\!\!\!\!\!\!\!\!\mathsf{k}_{j}\!\left(\{X_i   : i\in \mathcal{I}_{j} \},  \mathbf{T}(\underbar{X})  \right) \notag=\mathsf{k}_{1}\!\left(\{X_i  : i\in \mathcal{I}_{1} \},  \mathbf{T}(\underbar{X})   \right).
$$
\item[(ii)] $\mathsf{k}_{1}\!\left(\{X_i  : i\in \mathcal{I}_{1} \},  \mathbf{T}(\underbar{X})   \right)$ is equiprobable on $\mathbb{F}$.
\item[(iii)] $I\left(\mathsf{k}_{1}\!\left(\{X_i  : i\in \mathcal{I}_{1} \},  \mathbf{T}(\underbar{X})   \right);  \mathbf{T}(\underbar{X})  \right)=0$.
\end{enumerate} 

In words, requirement (iii) guarantees that the public transmissions $ \mathbf{T}(\underbar{X})$ reveal no information about  $\mathsf{k}_{1}\!\left(\{X_i  : i\in \mathcal{I}_{1} \},  \mathbf{T}(\underbar{X})  \right)$. 
Requirement (i) asserts that all clients $c_j \in \mathcal{C}$ can compute $\mathsf{k}_{1}\!\left(\{X_i  : i\in \mathcal{I}_{1} \},  \mathbf{T}(\underbar{X})  \right)$.  For these reasons, $\mathsf{k}_{1}\!\left(\{X_i  : i\in \mathcal{I}_{1} \},  \mathbf{T}(\underbar{X})  \right)$ is called a \emph{secret key}.  Naturally, a secret key should be equiprobable on its domain to make guessing difficult, thus motivating requirement (ii).

It is not immediately clear whether any protocol $\prot$ generates a SK. 
However, it turns out that such protocols exist in great abundance.  %
 In particular, the existence of protocols that generate a SK  depends solely on the family $\{\mathcal{I}_j\}$.  %

 \begin{theorem} \label{thm:secrecyCapacity} {\cite[Theorem 6]{bib:CourtadeWesel2013}}
For a network defined by $\{\mathcal{I}_j\}$, there exists a protocol $\mathbf{P}$ which generates a SK if and only if
\begin{align}
\left| \cup_{j} \mathcal{I}_j \right| \geq \mathsf{M}^{\star}\!\left(\{\mathcal{I}_j\}\right)+1. \label{eqn:nonzeroCSK}
\end{align}
 \end{theorem}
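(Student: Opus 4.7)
The proof splits naturally into sufficiency ($\Leftarrow$) and necessity ($\Rightarrow$), and I will handle them in that order.

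For sufficiency, my plan is to append a secret-key extraction step to a minimum-length omniscience protocol. Apply Theorem~\ref{thm:UR} to obtain a linear omniscience protocol $\prot_{\mathrm{L}}$ with encoding matrix $A \in \mathbb{F}^{\mathsf{M}^{\star}\times m}$; without loss of generality $A$ has full row rank $\mathsf{M}^{\star}$ (else redundant rows can be dropped), and $\mathbf{T}(\underbar{X},\prot_{\mathrm{L}}) = A\underbar{X}$. The hypothesis $m \geq \mathsf{M}^{\star}+1$ forces the row span of $A$ to be a proper subspace of $\mathbb{F}^m$, so I may pick some $\mathbf{v}\in\mathbb{F}^m$ lying outside it. Since $\prot_{\mathrm{L}}$ achieves omniscience, every client reconstructs $\underbar{X}$ and can therefore compute $\mathsf{k}_j := \mathbf{v}^T\underbar{X}$, which immediately gives property (i). Stacking $\mathbf{v}^T$ on top of $A$ produces an $(\mathsf{M}^{\star}+1)\times m$ matrix of full row rank, so the vector $(\mathbf{v}^T\underbar{X},\,A\underbar{X})$ is uniformly distributed on $\mathbb{F}^{\mathsf{M}^{\star}+1}$. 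Marginal uniformity of the first coordinate yields (ii), and independence of the two blocks (a consequence of uniformity on a product alphabet) yields (iii).

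For necessity, suppose $\prot$ generates a SK $K$. Because $\mathbf{T}$ and $K$ are both deterministic functions of $\underbar{X}$, combining this with (ii) and (iii) yields the chain
\begin{equation*}
m\log|\mathbb{F}| \;=\; H(\underbar{X}) \;\geq\; H(\mathbf{T},K) \;=\; H(\mathbf{T}) + H(K) \;=\; H(\mathbf{T}) + \log|\mathbb{F}|,
\end{equation*}
hence $H(\mathbf{T}) \leq (m-1)\log|\mathbb{F}|$. The task now reduces to converting this information-theoretic ceiling into the combinatorial bound $\mathsf{M}^{\star}\leq m-1$.

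The main obstacle in the necessity direction is that $\prot$ need not achieve omniscience, so Theorem~\ref{thm:UR} does not apply to $\mathbf{T}$ directly. My plan is to show that even without omniscience, $\mathbf{T}$ inherits a cutset-type lower bound: for any nonempty $\mathcal{S} \subsetneq [n]$, the transmissions made by clients in $\mathcal{S}$ must carry enough entropy for the clients in $\bar{\mathcal{S}}$ to recover the shared key whenever their collective side information falls short of determining it. Aggregating these bounds over a suitable collection of cuts, in the spirit of the LP dual of the integer program in \eqref{ILP}, should yield $H(\mathbf{T}) \geq \mathsf{M}^{\star}\log|\mathbb{F}|$; together with the entropy ceiling above this closes the argument. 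The delicate step is establishing the per-cut entropy bound precisely in the SK regime, where only the key $K$---not the full source $\underbar{X}$---needs to be recoverable by every client, and where the assumption $|\mathbb{F}|>n$ gives enough alphabet slack to patch any leftover genericity issues when reducing to an effective omniscience instance.
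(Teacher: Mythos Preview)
The paper does not prove this theorem in-text; it is quoted from \cite{bib:CourtadeWesel2013}. The only trace of the argument here is in the proof of Lemma~\ref{lem:SLleqM}, which confirms that the cited sufficiency construction is precisely yours: run a minimum-length linear omniscience protocol (Theorem~\ref{thm:UR}) and extract the key as a linear functional $\mathbf v^T\underbar{X}$ with $\mathbf v$ outside the row space of $A$. Your sufficiency direction is correct and matches the referenced approach.

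Your necessity plan, however, contains a genuine error: the target inequality $H(\mathbf{T}) \geq \mathsf{M}^{\star}\!\left(\{\mathcal{I}_j\}\right)\log|\mathbb{F}|$ is false in general. The explicit construction preceding the proof of Theorem~\ref{thm:SLneqS} has $m=4$, $\mathsf{M}^{\star}=2$, and exhibits a protocol that generates a SK with a \emph{single} transmission, so $H(\mathbf{T}) \leq \log|\mathbb{F}| < 2\log|\mathbb{F}|$; indeed the whole thrust of Section~\ref{sec:Results} is that SK protocols can use strictly fewer than $\mathsf{M}^{\star}$ transmissions. Conceptually, the per-cut bounds you propose concern recoverability of $K$, whose entropy is only $\log|\mathbb{F}|$, whereas the constraints of \eqref{ILP} encode recoverability of the missing messages, with entropy $|\cap_{j\in\bar{\mathcal{S}}}\bar{\mathcal{I}_j}|\log|\mathbb{F}|$; LP duality against \eqref{ILP} cannot manufacture the constant $\mathsf{M}^{\star}$ from SK-recovery constraints alone. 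The correct converse never lower-bounds $H(\mathbf{T})$. Instead it upper-bounds $H(K)$ directly via the Csisz\'ar--Narayan partition inequality: for every partition $\mathscr{P}=\{\mathcal{V}_1,\dots,\mathcal{V}_k\}$ of $[n]$ one has $(k-1)H(K)\leq\bigl(\sum_i|\cup_{j\in\mathcal{V}_i}\mathcal{I}_j|-m\bigr)\log|\mathbb{F}|$, and then one selects the specific tight partition $\mathscr{P}^{\star}$ (exactly the submodularity fact invoked later in the proof of Lemma~\ref{lem:HyperGraphConnection2}) for which the right-hand side equals $(k-1)(m-\mathsf{M}^{\star})\log|\mathbb{F}|$, giving $\log|\mathbb{F}|=H(K)\leq(m-\mathsf{M}^{\star})\log|\mathbb{F}|$ and hence \eqref{eqn:nonzeroCSK}.
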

Despite the fact that  $\mathsf{M}^{\star}\!\left(\{\mathcal{I}_j\}\right)$ corresponds to the optimal value of an ILP, it can be computed in time  polynomial in the number of messages $m=\left| \cup_{j} \mathcal{I}_j \right|$ (see \cite{bib:Milosavljevic,bib:CourtadeWesel2013}).  Therefore, for any family $\{\mathcal{I}_j\}$, we can efficiently test whether \eqref{eqn:nonzeroCSK} holds.
Hence, the essential remaining question is: ``\emph{How many transmissions are needed to generate a SK?}" 

To this end,  let $\mathcal{P}(\{\mathcal{I}_j\})$  denote the set of protocols for $\{\mathcal{I}_j\}$ that generate a SK, and define
\begin{align}
\mathsf{S}\!\left(\{\mathcal{I}_j\} \right) \triangleq \min\Big\{ \|\mathbf{T}(\underbar{X},\prot) \| : \prot \in \mathcal{P}(\{\mathcal{I}_j\})\Big\}.
\label{Sdefn}
\end{align}
That is, $\mathsf{S}\!\left(\{\mathcal{I}_j\} \right)$ is the minimum number of  transmissions needed to generate a SK.  Similarly, let $\mathcal{P}_{\mathrm{L}}(\{\mathcal{I}_j\})$  denote the set of linear protocols for $\{\mathcal{I}_j\}$ that generate a SK, and define
\begin{align}
\mathsf{S}_{\mathrm{L}}\!\left(\{\mathcal{I}_j\} \right) \triangleq \min\Big\{ \|\mathbf{T}(\underbar{X},\prot) \| : \prot \in \mathcal{P}_{\mathrm{L}}(\{\mathcal{I}_j\})\Big\}. \label{SLdefn}
\end{align}
In words, $\mathsf{S}_{\mathrm{L}}\!\left(\{\mathcal{I}_j\} \right)$ is the minimum number of  transmissions required to generate a SK when we restrict our attention to linear protocols. If $\{\mathcal{I}_j\}$ does not satisfy \eqref{eqn:nonzeroCSK}, then we set $\mathsf{S}\!\left(\{\mathcal{I}_j\} \right)=\mathsf{S}_{\mathrm{L}}\!\left(\{\mathcal{I}_j\} \right)=\infty$.

\begin{remark}
We will often write ``$\{\mathcal{I}_j\}$ generates a SK" instead of the more accurate, but cumbersome, ``For the network defined by $\{\mathcal{I}_j\}$, there exists a protocol $\mathbf{P}$ which generates a SK" whenever \eqref{eqn:nonzeroCSK} holds.
\end{remark}

\section{Generating a Single Secret Key}\label{sec:Results}
In this section, we investigate the number of transmissions required to generate a SK.  In particular, we completely characterize  $\mathsf{S}_{\mathrm{L}}\!\left(\{\mathcal{I}_j\} \right)$, and make progress toward characterizing $\mathsf{S}\!\left(\{\mathcal{I}_j\} \right)$.  We will treat the more general case of generating multiple secret keys with minimum public communication in Section \ref{sec:multipleSK}.  Since the single-SK setting is arguably the most important in practice and the notation is less cumbersome than the general case, we find it beneficial to highlight the single-SK setting in the present section.

As demonstrated in the previous section, the CCDE and SK-generation problems are closely connected through the quantity $\mathsf{M}^{\star}\!\left(\{\mathcal{I}_j\}\right)$.  Since Theorem \ref{thm:UR} and the tractability of ILP \eqref{ILP} essentially resolve the CCDE problem, it is natural to conjecture that a similar result should hold for $\mathsf{S}\!\left(\{\mathcal{I}_j\} \right)$ and $\mathsf{S}_{\mathrm{L}}\!\left(\{\mathcal{I}_j\} \right)$.  Unfortunately, there is a fundamental difference between the problems, which is revealed by the following two negative results:%

\begin{theorem}\label{thm:NPHard}
Computing $\mathsf{S}_{\mathrm{L}}\!\left(\{\mathcal{I}_j\} \right)$ is \textsf{NP}-hard.
\end{theorem}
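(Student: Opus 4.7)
The plan is to first reformulate $\mathsf{S}_{\mathrm{L}}(\{\mathcal{I}_j\})$ as a subspace-optimization problem, then recast it as a combinatorial problem on a hypergraph, and finally reduce from a known \textsf{NP}-hard problem to establish intractability.

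First, I would derive a linear-algebraic characterization. Let $W_j := \mathrm{span}\{e_i : i \in \mathcal{I}_j\} \subseteq \mathbb{F}^m$ denote the coordinate subspace indexed by $\mathcal{I}_j$. A linear protocol of $t$ transmissions is specified by subspaces $R_j \subseteq W_j$ (the span of the rows of $A$ transmitted by client $c_j$), and, after pruning within-client linear dependencies, $t = \sum_j \dim R_j$. Setting $R := \sum_j R_j$, the SK requirements (i)--(iii) translate, for linear keys $v^T \underbar{X}$, into $v \in W_j + R$ for every $j$ (so each client can reconstruct the key) and $v \notin R$ (so the key is statistically independent of $A\underbar{X}$). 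This yields
\begin{equation*}
\mathsf{S}_{\mathrm{L}}(\{\mathcal{I}_j\}) \;=\; \min\Big\{ \textstyle\sum_j \dim R_j \;:\; R_j \subseteq W_j,\; \textstyle\bigcap_j (W_j + R) \supsetneq R \Big\}.
\end{equation*}
I would also verify that, for linear protocols under the uniform message distribution, restricting attention to linear keys is without loss of generality.

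Next, I would reinterpret this subspace problem on the hypergraph whose vertex set is $[m]$ and whose hyperedge set is $\{\mathcal{I}_j\}$, obtaining the hypergraph-connectivity characterization promised in the Contributions. In this view, the cost $\sum_j \dim R_j$ measures total hyperedge ``usage,'' while the deficiency condition $\bigcap_j (W_j+R) \supsetneq R$ becomes a connectivity-style constraint amenable to combinatorial analysis.

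Finally, I would reduce from a known \textsf{NP}-hard combinatorial problem to the computation of $\mathsf{S}_{\mathrm{L}}$. Natural source candidates include minimum-weight Steiner tree on hypergraphs, minimum rainbow spanning tree, and 3-dimensional matching. Given an instance of the source problem, I would construct a family $\{\mathcal{I}_j\}$ via gadgets such that $\mathsf{S}_{\mathrm{L}}(\{\mathcal{I}_j\}) \leq K$ for a chosen threshold $K$ if and only if the source instance is a YES-instance. The main obstacle is that the $R_j$ in an optimal solution are a priori arbitrary subspaces of $W_j$ rather than coordinate subspaces, whereas the source problem is purely combinatorial. The gadgets must therefore force the optimum to use combinatorially restricted $R_j$ (e.g., one-dimensional coordinate subspaces) by introducing auxiliary messages and clients whose index-set structure makes non-coordinate choices strictly suboptimal, all while keeping the reduction polynomial in size. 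Pinning down the right source problem and designing these discretizing gadgets together constitute the central technical hurdle.
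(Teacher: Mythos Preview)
Your outline sets up the linear-algebraic picture correctly and even names the real obstacle --- that the optimal $R_j$ are a priori arbitrary subspaces of $W_j$, not coordinate subspaces --- but then defers that obstacle to unspecified ``discretizing gadgets'' in the reduction. That is the gap: without first collapsing the subspace optimization to a purely combinatorial problem, you have no concrete target to reduce \emph{to}, and the proposal as written is a plan to find a proof rather than a proof.

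The paper closes this gap with a structural lemma you are missing. The key step (Lemma~\ref{lem:LinearRemoval}) shows that whenever $\mathsf{S}_{\mathrm{L}}(\{\mathcal{I}_j\}) < \mathsf{M}^{\star}(\{\mathcal{I}_j\})$, some message $\ell$ can be deleted with $\mathsf{S}_{\mathrm{L}}(\{\mathcal{I}_j-\ell\}) = \mathsf{S}_{\mathrm{L}}(\{\mathcal{I}_j\})$; the argument exploits a nullspace vector of the encoding matrix $A$ to ``zero out'' $X_\ell$ while preserving the SK. Iterating this removal yields a subfamily on which $\mathsf{S}_{\mathrm{L}}$ equals $\mathsf{M}^{\star}$, and hence (Theorem~\ref{thm:SLminimumOneCritical}) $\mathsf{S}_{\mathrm{L}}(\{\mathcal{I}_j\}) = |\cup_j \mathcal{J}^{\star}_j| - 1$ for a minimum $1$-critical subfamily $\{\mathcal{J}^{\star}_j\}$. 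This dissolves your subspace-versus-coordinate difficulty outright: the optimum is always realized by deleting messages and then running an omniscience protocol on what remains, so no non-coordinate $R_j$ ever need be considered. A second lemma (Lemma~\ref{lem:HyperGraphConnection}) identifies $1$-critical subfamilies with minimal connected dominating edge sets in the hypergraph whose \emph{vertices are the clients and edges are the messages} --- note this is the incidence dual of the hypergraph you proposed, and the paper's orientation is what makes the final reduction immediate. With the characterization in hand, the hardness reduction is a two-line argument from {\sc Set Cover}: adjoin a common new element $u'$ to every set so that any cover is automatically connected, and minimum cover size equals minimum connected dominating edge-set size. None of the heavier candidates you list (Steiner tree, rainbow spanning tree, 3DM) are needed.
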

\begin{theorem}\label{thm:SLneqS}
For any integer $k$, there exist families $\{\mathcal{I}_j\}$ for which $\mathsf{S}_{\mathrm{L}}\!\left(\{\mathcal{I}_j\} \right) > \mathsf{S}\!\left(\{\mathcal{I}_j\} \right)+k$.
\end{theorem}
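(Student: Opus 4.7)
The strategy proceeds in two stages. First, I would exhibit one base family $\{\mathcal{I}_j^{\circ}\}$ for which $\mathsf{S}_{\mathrm{L}}(\{\mathcal{I}_j^{\circ}\}) \geq \mathsf{S}(\{\mathcal{I}_j^{\circ}\}) + 1$. Second, I would amplify this unit gap to an arbitrary $k$ via a disjoint parallel-copies construction.

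For the base instance, my approach is to design a small family on a handful of clients carefully tailored so that two things hold simultaneously: using the hypergraph-connectivity characterization of $\mathsf{S}_{\mathrm{L}}$ advertised in the contributions (and exploited in Theorem \ref{thm:NPHard}), the minimum number of linear transmissions can be pinned down exactly, while at the same time an explicit nonlinear protocol beats this linear bound by at least one transmission and still satisfies the strict secrecy requirement (iii). The nonlinear saving would come from extracting a uniform field element as a nonlinear function of several messages in a way that decouples from the public transcript through structural features of the message alphabet that no linear functional $\underline{c}^T \underline{X}$ can reproduce.

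For the amplification, given a base family on messages $X_1,\ldots,X_m$, I would form $N$ independent copies by introducing fresh messages $X_i^{(r)}$ for $r\in[N]$ and defining new index sets $\mathcal{I}_j^{(N)}$ on the disjoint union of the $N$ message pools, preserving the base combinatorial structure in each block. Two inequalities then yield the gap. Running the optimal general SK protocol independently in each copy to produce keys $K_1,\ldots,K_N\in\mathbb{F}$ and taking $K = K_1+\cdots+K_N$ yields a single uniform SK on $\mathbb{F}$ independent of the full transcript (since different copies employ independent messages and transcripts, each $K_r$ is independent of everything outside its own copy, and the sum of independent uniforms is uniform), giving $\mathsf{S}(\{\mathcal{I}_j^{(N)}\}) \leq N\cdot\mathsf{S}(\{\mathcal{I}_j^{\circ}\})$. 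In the opposite direction, the hypergraph-connectivity integer program characterizing $\mathsf{S}_{\mathrm{L}}$ decomposes across the $N$ independent message blocks, yielding $\mathsf{S}_{\mathrm{L}}(\{\mathcal{I}_j^{(N)}\}) \geq N\cdot\mathsf{S}_{\mathrm{L}}(\{\mathcal{I}_j^{\circ}\})$. Combining, $\mathsf{S}_{\mathrm{L}} - \mathsf{S} \geq N$ on the $N$-fold family, so choosing $N\geq k$ completes the proof.

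The main obstacle is unquestionably the base construction itself: one must both prove a clean combinatorial lower bound on the number of linear transmissions \emph{and} exhibit an explicit nonlinear protocol that strictly beats it while preserving perfect secrecy. The subadditivity of $\mathsf{S}$ under disjoint copies is essentially immediate from the uniformity of field elements and the independence of the copies, and the additivity of $\mathsf{S}_{\mathrm{L}}$ under disjoint copies should follow routinely once the linear characterization is written down, so these steps pose no serious difficulty relative to locating the small gap-producing example.
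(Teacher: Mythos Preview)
Your amplification step does not work. In the $N$-fold construction you keep the \emph{same} client set $\mathcal{C}$ and give each client $N$ fresh copies of its message block. The hypergraph representation therefore has the same vertex set $\mathcal{V}=\mathcal{C}$ as the base family, with each hyperedge simply replicated $N$ times in parallel. By Theorem~\ref{thm:SLminimumOneCritical} together with Lemma~\ref{lem:HyperGraphConnection}, $\mathsf{S}_{\mathrm{L}}$ is one less than the size of a minimum connected dominating edge set in this hypergraph; but parallel hyperedges are redundant for connectivity, so the minimum connected dominating edge set of the $N$-copy hypergraph is the same as that of the original. Hence
\[
\mathsf{S}_{\mathrm{L}}\!\left(\{\mathcal{I}_j^{(N)}\}\right)=\mathsf{S}_{\mathrm{L}}\!\left(\{\mathcal{I}_j^{\circ}\}\right),
\]
not $N\cdot \mathsf{S}_{\mathrm{L}}(\{\mathcal{I}_j^{\circ}\})$. (Equivalently, Lemma~\ref{lem:monotone} already forces $\mathsf{S}_{\mathrm{L}}(\{\mathcal{I}_j^{(N)}\})\leq \mathsf{S}_{\mathrm{L}}(\{\mathcal{I}_j^{\circ}\})$, since the single copy is a subfamily of the $N$-copy family.) So the gap $\mathsf{S}_{\mathrm{L}}-\mathsf{S}$ cannot exceed $\mathsf{S}_{\mathrm{L}}(\{\mathcal{I}_j^{\circ}\})$ no matter how many copies you take, and your scheme never amplifies. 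Replicating the clients as well does not help either: then the hypergraph becomes disconnected and no SK exists at all.

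The paper avoids amplification entirely. It builds a single parametrized family in which client $c_1$ holds all of $[m]$ and the remaining $\binom{m}{2}$ clients hold the distinct $2$-subsets. Any $1$-critical subfamily here has exactly $m-1$ messages, so $\mathsf{S}_{\mathrm{L}}=m-2$. The nonlinear protocol is \emph{vector-linear}: split each message in half and let $c_1$ run the optimal omniscience protocol on the first halves only, requiring $m/2-1$ transmissions and leaving a uniform SK in $\mathbb{F}$ independent of the transcript. The gap $m/2-1$ is then made arbitrarily large by increasing $m$, not by tensoring a fixed instance.
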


For the  CCDE problem, Theorem \ref{thm:UR} asserts that linear protocols achieve optimal performance.  Furthermore,  the number of transmissions required by  linear protocols is easily computed.  For the problem  of SK generation, the opposite is true. That is, linear protocols can be suboptimal, and the number of transmissions required by  linear protocols is generally difficult to compute. 
This situation is parallel to that of  multicast network coding  and  index coding.  The two problems are closely related (cf. \cite{effros2012equivalence}), but exhibit the same dichotomy.  See 
\cite{alon2008broadcasting, lubetzky2009nonlinear, blasiak2011lexicographic} and our remark at the end of this section for more details.

\subsection{Proof of Theorem \ref{thm:NPHard}}

Despite the negative results offered by Theorems \ref{thm:NPHard} and \ref{thm:SLneqS}, we can characterize several properties of 
 $\mathsf{S}_{\mathrm{L}}\!\left(\{\mathcal{I}_j\} \right)$, $\mathsf{S}\!\left(\{\mathcal{I}_j\} \right)$, and $\mathsf{M}^{\star}\!\left(\{\mathcal{I}_j\}\right)$.  Some of these properties are demonstrated in the following results, which are needed as we progress toward proving Theorem \ref{thm:NPHard}.  A complete characterization of $\mathsf{S}_{\mathrm{L}}\!\left(\{\mathcal{I}_j\} \right)$ will be given in Theorem \ref{thm:SLminimumOneCritical}.  %
 
 \begin{lemma}\label{lem:SLleqM}
If $\{\mathcal{I}_j\}$ generates a SK, then 
\begin{align}
\mathsf{S}\!\left(\{\mathcal{I}_j\} \right) \leq \mathsf{S}_{\mathrm{L}}\!\left(\{\mathcal{I}_j\} \right)  \leq \mathsf{M}^{\star}\!\left(\{\mathcal{I}_j\}\right).
\end{align} 
\end{lemma}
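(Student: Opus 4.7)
The first inequality is immediate: $\mathcal{P}_{\mathrm{L}}(\{\mathcal{I}_j\})\subseteq \mathcal{P}(\{\mathcal{I}_j\})$, so taking the minimum over the larger set can only decrease the value in \eqref{Sdefn} relative to \eqref{SLdefn}. The content of the lemma is the second inequality, and the plan is constructive: build an explicit linear SK-generating protocol out of an omniscience-achieving linear protocol supplied by Theorem \ref{thm:UR}.

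Invoke Theorem \ref{thm:UR} to obtain a linear protocol $\prot_{\mathrm{L}}$ that achieves omniscience with $\|\mathbf{T}(\underbar{X},\prot_{\mathrm{L}})\| = \mathsf{M}^{\star}\!\left(\{\mathcal{I}_j\}\right)$. Since the protocol is linear, we may write $\mathbf{T}(\underbar{X},\prot_{\mathrm{L}}) = A\underbar{X}$ for some matrix $A\in \mathbb{F}^{\mathsf{M}^{\star} \times m}$, where $m = |\cup_j \mathcal{I}_j|$. After running $\prot_{\mathrm{L}}$ every client knows $\underbar{X}$ in full, so all clients can evaluate any common function of $\underbar{X}$. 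The task is then reduced to exhibiting a function of $\underbar{X}$ that plays the role of a SK — without any additional transmissions.

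Take the SK to be $K=\mathbf{v}^T\underbar{X}$ for a vector $\mathbf{v}\in\mathbb{F}^m$ to be chosen. For $K$ to satisfy the SK properties (i)--(iii), it suffices that $\mathbf{v}\notin \mathrm{rowspan}(A)$ and $\mathbf{v}\neq 0$. Such a $\mathbf{v}$ exists because Theorem \ref{thm:secrecyCapacity} gives $m\geq \mathsf{M}^{\star}\!\left(\{\mathcal{I}_j\}\right)+1 > \mathrm{rank}(A)$, so $\mathrm{rowspan}(A)$ is a proper subspace of $\mathbb{F}^m$. Property (i) is immediate since each client knows $\underbar{X}$, and property (ii) follows because $\mathbf{v}\neq 0$ and the entries of $\underbar{X}$ are i.i.d.\ uniform on $\mathbb{F}$, making $\mathbf{v}^T\underbar{X}$ uniform on $\mathbb{F}$.

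The only substantive point to verify is independence, property (iii). Condition on $A\underbar{X}=\mathbf{a}$; then $\underbar{X}$ is uniform on the coset $\underbar{X}_0+\ker(A)$. Over any field, $\mathrm{rowspan}(A)=\ker(A)^{\perp}$ by rank-nullity and the non-degeneracy of the standard bilinear form, so $\mathbf{v}\notin\mathrm{rowspan}(A)$ implies $\mathbf{v}^T\mathbf{y}\neq 0$ for some $\mathbf{y}\in \ker(A)$. Since $\ker(A)$ is an $\mathbb{F}$-subspace, the linear map $\mathbf{y}\mapsto \mathbf{v}^T\mathbf{y}$ restricted to $\ker(A)$ is then surjective onto $\mathbb{F}$, whence $\mathbf{v}^T\underbar{X}\mid\{A\underbar{X}=\mathbf{a}\}$ is uniform on $\mathbb{F}$ for every $\mathbf{a}$. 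Thus $K$ is independent of $\mathbf{T}(\underbar{X},\prot_{\mathrm{L}})$, and the augmented protocol (which is still $\prot_{\mathrm{L}}$, with $K$ computed locally and without further transmission) belongs to $\mathcal{P}_{\mathrm{L}}(\{\mathcal{I}_j\})$ and uses $\mathsf{M}^{\star}\!\left(\{\mathcal{I}_j\}\right)$ transmissions. The main potential pitfall is the row-span/kernel duality in characteristic $p$; as noted, rank-nullity and non-degeneracy of the standard form keep this valid over any field, and in particular over $\mathbb{F}$.
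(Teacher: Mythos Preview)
Your proof is correct and follows essentially the same approach as the paper. The paper's own argument for the second inequality simply cites the proof of \cite[Theorem~6]{bib:CourtadeWesel2013}, which constructs a linear SK-generating protocol using exactly $\mathsf{M}^{\star}(\{\mathcal{I}_j\})$ transmissions; you have unpacked that construction explicitly---run an optimal linear omniscience protocol from Theorem~\ref{thm:UR} and extract a key $\mathbf{v}^T\underbar{X}$ with $\mathbf{v}$ outside $\mathrm{rowspan}(A)$---and verified the SK properties directly, which is a self-contained and slightly more informative presentation of the same idea.
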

\begin{proof}
By definition, $\mathsf{S}\!\left(\{\mathcal{I}_j\} \right) \leq \mathsf{S}_{\mathrm{L}}\!\left(\{\mathcal{I}_j\} \right)$ since $\mathcal{P}_{\mathrm{L}}(\{\mathcal{I}_j\}) \subseteq \mathcal{P}(\{\mathcal{I}_j\})$.  The second inequality follows from the proof of \cite[Theorem 6]{bib:CourtadeWesel2013}, in which a  linear transmission protocol $\prot$ is constructed that generates a SK with $\|\mathbf{T}(\underbar{X}),\prot)\| = \mathsf{M}^{\star}\!\left(\{\mathcal{I}_j\}\right)$ communication rounds.
\end{proof}

 We say that $\{\mathcal{J}_j \}$ is a \emph{subfamily} of $\{\mathcal{I}_j \}$ if there is a set $\mathcal{S} \subset \cup_{j} \mathcal{I}_j$ such that $\mathcal{J}_j = \mathcal{I}_j\backslash \mathcal{S}$ for all $j\in[n]$. 

 \begin{lemma}\label{lem:monotone}
If $\{\mathcal{J}_j \}$ is a subfamily of $\{\mathcal{I}_j \}$, then
\begin{align}
\mathsf{M}^{\star}\!\left(\{\mathcal{J}_j\}\right) &\leq \mathsf{M}^{\star}\!\left(\{\mathcal{I}_j\}\right),\label{MonotoneM}\\
\mathsf{S}\!\left(\{\mathcal{J}_j\} \right) &\geq \mathsf{S}\!\left(\{\mathcal{I}_j\} \right), \mbox{~and}\label{monotonS}\\
\mathsf{S}_{\mathrm{L}}\!\left(\{\mathcal{J}_j\} \right) &\geq \mathsf{S}_{\mathrm{L}}\!\left(\{\mathcal{I}_j\} \right).\label{monotonSL}
\end{align}
\end{lemma}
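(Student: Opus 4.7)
The plan is to handle \eqref{MonotoneM} by a direct manipulation of the ILP \eqref{ILP}, and to handle \eqref{monotonS}--\eqref{monotonSL} by a single ``lifting'' argument that takes any SK-generating protocol for the subfamily $\{\mathcal{J}_j\}$ and runs it verbatim in the network defined by $\{\mathcal{I}_j\}$. The basic point for the last two inequalities is that the extra messages $\{X_i : i \in \mathcal{S}\}$ available in the $\{\mathcal{I}_j\}$ network are statistically independent of everything that the subfamily protocol touches, so they do not disturb any of the three SK conditions.

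First I would establish the set-theoretic identities $\cup_k \mathcal{J}_k = (\cup_k \mathcal{I}_k)\setminus \mathcal{S}$ and $\bar{\mathcal{J}_j} = \bar{\mathcal{I}_j}\setminus \mathcal{S}$ for every $j$, which together give $\cap_{j \in \bar{\mathcal{S}'}} \bar{\mathcal{J}_j} = \bigl(\cap_{j \in \bar{\mathcal{S}'}} \bar{\mathcal{I}_j}\bigr)\setminus \mathcal{S}$ for every nonempty $\mathcal{S}' \subset [n]$. Hence each right-hand side in the ILP for $\{\mathcal{J}_j\}$ is bounded above by the corresponding right-hand side in the ILP for $\{\mathcal{I}_j\}$. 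Consequently every feasible point of the $\{\mathcal{I}_j\}$-ILP is also feasible for the $\{\mathcal{J}_j\}$-ILP, and taking the optimal value of the former yields \eqref{MonotoneM}.

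Next, to prove \eqref{monotonS} and \eqref{monotonSL} simultaneously, I would take any (possibly nonlinear) protocol $\prot \in \mathcal{P}(\{\mathcal{J}_j\})$ with $\|\mathbf{T}(\underline{X},\prot)\| = \mathsf{S}(\{\mathcal{J}_j\})$ and apply it, unchanged, in the $\{\mathcal{I}_j\}$-network. Since $\mathcal{J}_j \subseteq \mathcal{I}_j$, every encoding function in \eqref{encFn} remains a valid function of client $c_{i_k}$'s messages, and the resulting protocol is linear if and only if $\prot$ was linear. Let $K = \mathsf{k}_1(\{X_i : i \in \mathcal{J}_1\},\mathbf{T}(\underline{X}))$ be the SK produced by $\prot$. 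In the larger network, client $c_j$ still knows $\{X_i : i \in \mathcal{J}_j\}$, so requirement (i) carries over without change; requirements (ii) and (iii) depend only on the joint distribution of $(\{X_i : i \in \cup_k \mathcal{J}_k\},\mathbf{T}(\underline{X}),K)$, which is identical in the two settings because the extra messages $\{X_i : i \in \mathcal{S}\}$ are independent of this triple by construction. Thus $\prot$ generates a SK for $\{\mathcal{I}_j\}$ using the same number of transmissions, giving \eqref{monotonS}; repeating the argument for $\prot \in \mathcal{P}_{\mathrm{L}}(\{\mathcal{J}_j\})$ gives \eqref{monotonSL}.

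The only slightly delicate point is verifying (iii) under lifting: I would phrase it as observing that $\mathbf{T}(\underline{X})$ and $K$ are both deterministic functions only of $\{X_i : i \in \cup_k \mathcal{J}_k\}$, and that the mutual information $I(K;\mathbf{T}(\underline{X}))$ is a functional of their joint distribution alone; independence of $\{X_i : i \in \mathcal{S}\}$ from this collection then makes the mutual information identical to the value computed in the $\{\mathcal{J}_j\}$-network, which is zero. No other part of the argument appears to pose an obstacle.
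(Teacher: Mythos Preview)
Your proposal is correct and follows essentially the same approach as the paper: relax the ILP constraints via the set identity $\bar{\mathcal{J}_j}=\bar{\mathcal{I}_j}\setminus\mathcal{S}$ for \eqref{MonotoneM}, and lift any SK-generating protocol for $\{\mathcal{J}_j\}$ to $\{\mathcal{I}_j\}$ by ignoring the extra messages for \eqref{monotonS}--\eqref{monotonSL}. The only minor omission is the trivial case $\mathcal{P}(\{\mathcal{J}_j\})=\emptyset$, where $\mathsf{S}(\{\mathcal{J}_j\})=\infty$ and the inequalities hold vacuously; the paper notes this explicitly, and you should too.
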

\begin{proof}
By De Morgan's law, it is easy to verify that 
\begin{align}
\left|\bigcap_{j\in \bar{\mathcal{S}} } \bar{\mathcal{I}_j} \right| \geq \left|\bigcap_{j\in \bar{\mathcal{S}} } \bar{\mathcal{J}_j} \right| \mbox{~~for all nonempty $\mathcal{S}\subset [n]$},
\end{align}
where $\bar{\mathcal{I}_i}\triangleq \left( \cup_j \mathcal{I}_j \right)\backslash \mathcal{I}_i$ and $\bar{\mathcal{J}_i}\triangleq \left( \cup_j \mathcal{J}_j \right)\backslash \mathcal{J}_i$.  Therefore, the constraints in ILP \eqref{ILP} are relaxed, and $\mathsf{M}^{\star}\!\left(\{\mathcal{J}_j\}\right) \leq \mathsf{M}^{\star}\!\left(\{\mathcal{I}_j\}\right)$ by definition.  

To show \eqref{monotonS}, observe that any transmission protocol which generates a SK for the subfamily $\{\mathcal{J}_j \}$ also generates a SK for the family $\{\mathcal{I}_j \}$ by ignoring the set of messages 
$\{ X_i : i \notin \cup_j \mathcal{J}_j\}$.
Hence, it follows  that $\mathsf{S}\!\left(\{\mathcal{J}_j\} \right) \geq \mathsf{S}\!\left(\{\mathcal{I}_j\} \right)$. If $\{\mathcal{J}_j \}$ can not generate a SK, the inequality trivially holds.  This argument also proves \eqref{monotonSL}.
\end{proof}

 Lemma \ref{lem:monotone} demonstrates monotonicity, but offers no insight into whether  inequalities \eqref{MonotoneM}-\eqref{monotonSL} are tight. 
 The following lemma identifies settings under which \eqref{monotonSL} holds with equality, and will prove useful later on.
 \begin{lemma}\label{lem:LinearRemoval}
 If $\mathsf{S}_{\mathrm{L}}\!\left(\{\mathcal{I}_j\} \right) < \mathsf{M}^{\star}\!\left(\{\mathcal{I}_j\}\right)$, then there exists some $\ell \in \cup_j \mathcal{I}_j$ for which 
 $\mathsf{S}_{\mathrm{L}}\!\left(\{\mathcal{I}_j - \ell\} \right) = \mathsf{S}_{\mathrm{L}}\!\left(\{\mathcal{I}_j\} \right)$.
 \end{lemma}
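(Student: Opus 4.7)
The plan is to start with an optimal linear protocol $\prot$ for $\{\mathcal{I}_j\}$ -- given by an encoding matrix $A \in \mathbb{F}^{t \times m}$ with $t = \mathsf{S}_{\mathrm{L}}(\{\mathcal{I}_j\})$ -- identify an index $\ell_0$ of a message that some client cannot linearly recover under $\prot$, and show that the matrix $A' \in \mathbb{F}^{t \times (m-1)}$ obtained by deleting column $\ell_0$ from $A$ still generates a SK, now for the subfamily $\{\mathcal{I}_j - \ell_0\}$. Throughout, I will use the standard reformulation of linear SK generation: $A$ admits a SK if and only if there is a nonzero $v \in \mathbb{F}^m$ that lies in $V_j := \mathrm{rowspan}(A) + \mathrm{span}\{e_i : i \in \mathcal{I}_j\}$ for every $j$ (recoverability) but outside $\mathrm{rowspan}(A)$ (independence from the transmissions), in which case $v^T \underbar{X}$ is a valid SK.

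To choose $\ell_0$, I would invoke Theorem \ref{thm:UR}: the strict inequality $t < \mathsf{M}^\star(\{\mathcal{I}_j\})$ rules out omniscience under $\prot$, so $V_{j_0} \neq \mathbb{F}^m$ for some $j_0$; since $V_{j_0}$ already contains $\{e_i : i \in \mathcal{I}_{j_0}\}$, there must be some $\ell_0 \in [m] \setminus \mathcal{I}_{j_0}$ with $e_{\ell_0} \notin V_{j_0}$. Using $c_{j_0}$'s decomposition $v = \alpha^T A + \beta$ with $\mathrm{supp}(\beta) \subseteq \mathcal{I}_{j_0}$, I would replace $v$ by its coset representative $\tilde v := v - \alpha^T A = \beta$, whose support lies in $\mathcal{I}_{j_0}$, so in particular $\tilde v_{\ell_0} = 0$. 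The candidate SK vector for the new setup is $v' \in \mathbb{F}^{m-1}$, obtained by deleting the $\ell_0$-th (zero) entry of $\tilde v$, paired with the encoding matrix $A'$ -- which remains a valid linear protocol for $\{\mathcal{I}_j - \ell_0\}$ since each row's support, originally contained in some $\mathcal{I}_{i_k}$, is now contained in $\mathcal{I}_{i_k} - \ell_0$.

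Checking that $v'$ satisfies recoverability at every client in the new setup is a routine translation of the original $V_j$-decompositions of $v$, handled by splitting according to whether $\ell_0 \in \mathcal{I}_j$ and using $\tilde v_{\ell_0} = 0$ to absorb any $e_{\ell_0}$ term. The main obstacle, where the defining property of $\ell_0$ enters essentially, is verifying secrecy, i.e.\ $v' \notin \mathrm{rowspan}(A')$. I would proceed by contradiction: any relation $v' = \gamma^T A'$ lifts back to $\tilde v - \gamma^T A = c \cdot e_{\ell_0}^T$ for some $c \in \mathbb{F}$, since the two sides agree off of coordinate $\ell_0$. As $\tilde v$ and $\gamma^T A$ both lie in $V_{j_0}$, this forces $c \cdot e_{\ell_0} \in V_{j_0}$, hence $c = 0$ by the choice of $\ell_0$; but then $\tilde v \in \mathrm{rowspan}(A)$, so $v$ is too, contradicting the secrecy of $v$ for $\prot$. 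Combined with the reverse inequality supplied by Lemma \ref{lem:monotone}, this yields $\mathsf{S}_{\mathrm{L}}(\{\mathcal{I}_j - \ell_0\}) = \mathsf{S}_{\mathrm{L}}(\{\mathcal{I}_j\})$.
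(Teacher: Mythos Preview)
Your argument is correct, but it follows a genuinely different---essentially dual---route from the paper's.  The paper never invokes the rowspan reformulation you call ``standard.''  Instead, it works on the \emph{nullspace} side: from the failure of omniscience at some client (say $c_1$), it extracts a nonzero $\underbar{v}\in\ker A$ with $\mathrm{supp}(\underbar{v})\subseteq [m]\setminus\mathcal{I}_1$, picks $\ell$ with $v_\ell\neq 0$, and then argues distributionally.  Setting $\hat{X}_\ell\equiv 0$ and $\underbar{X}'=\hat{\underbar{X}}+X_\ell\cdot\underbar{v}$, one has $A\underbar{X}'=A\hat{\underbar{X}}$, the inputs to $\mathsf{k}_1$ are unchanged (since $v_i=0$ on $\mathcal{I}_1$), and $\underbar{X}'\overset{d}{=}\underbar{X}$; this transfers all three SK properties of the \emph{original} decoding functions $\mathsf{k}_j$ to the subfamily verbatim, with no need to exhibit a linear key.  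Your condition $e_{\ell_0}\notin V_{j_0}$ is exactly the dual of the paper's: it holds iff there is such a nullspace vector supported off $\mathcal{I}_{j_0}$ with nonzero $\ell_0$-coordinate.

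What each approach buys: yours is cleaner linear algebra once the reformulation (``a linear protocol admits a SK iff $\bigcap_j V_j\supsetneq\mathrm{rowspan}(A)$'') is on the table, and the secrecy step via $c\,e_{\ell_0}\in V_{j_0}\Rightarrow c=0$ is pleasantly direct.  The paper's approach is self-contained---it never needs to argue that a linear protocol with \emph{some} SK also has a \emph{linear} SK---and the substitution trick handles the possibly nonlinear decoders $\mathsf{k}_j$ in one stroke.  One small imprecision in your write-up: rows of $A$ need not have support contained in a single $\mathcal{I}_{i_k}$ (a client may combine its own messages with prior transmissions), but this is harmless---either normalize $A$ row-by-row so that it does, or note that the ``row $k$ lies in $\mathrm{rowspan}(\text{rows }1..k{-}1)+\mathrm{span}\{e_i:i\in\mathcal{I}_{i_k}\}$'' property is preserved under column deletion.
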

 \begin{proof}
Define $m \triangleq |\cup_j \mathcal{I}_j|$.  
 By definition, there is a linear transmission protocol $\prot_{\mathrm{L}}$ which generates a SK in $\mathsf{S}_{\mathrm{L}}\!\left(\{\mathcal{I}_j\} \right)$ communication rounds.  Let  $\mathbf{T}(\underbar{X},\prot_{\mathrm{L}}) = A\underbar{X}$ be the sequence of transmissions made by $\prot_{\mathrm{L}}$, and let $\{\mathsf{k}_1,\dots,  \mathsf{k}_n\}$ be valid decoding functions.

Since $\|\mathbf{T}(\underbar{X},\prot_{\mathrm{L}})\| = \mathsf{S}_{\mathrm{L}}\!\left(\{\mathcal{I}_j\} \right) < \mathsf{M}^{\star}\!\left(\{\mathcal{I}_j\}\right)$, Theorem \ref{thm:UR} asserts that the protocol $\prot_{\mathrm{L}}$ can not achieve omniscience.  Therefore, by a possible permutation of clients, we can assume without loss of generality that there is no function $\mathsf{g}_1$ for which 
\begin{align}
 \mathsf{g}_{1}\!\left(\{X_i : i\in \mathcal{I}_{1} \},   A \underbar{X}  \right) = \underbar{X} \mbox{~~with probability 1}.
\end{align}
As a consequence, there must exist a nonzero vector $\underbar{v}$ such that $A\underbar{v}=0$, and $v_i=0$ for all $i\in \mathcal{I}_1$.  Indeed, if there is no such $\underbar{v}$, then $A$ has empty nullspace and  client $c_1$ can solve a full-rank system of equations to recover $\underbar{X}$, yielding a contradiction.  Since $\underbar{v}$ is not identically zero, there is some $\ell\notin \mathcal{I}_1$ for which $v_{\ell}\neq 0$. %

Considering any such $\ell$, we define $\hat{X}_{\ell} \equiv 0$, and $\hat{X}_i \triangleq X_i$ for $i\in \cup_j (\mathcal{I}_j-\ell)$.  Also, define vectors $\hat{\underbar{X}} \triangleq [\hat{X}_1,\hat{X}_2,\dots,\hat{X}_m]^T$ and $\underbar{X}' \triangleq \hat{\underbar{X}} + X_{\ell} \cdot \underbar{v}$.  First, we note that  
\begin{align}
&\mathsf{k}_{j}\!\left(\{\hat{X}_i  : i\in \mathcal{I}_{j} \},  A \hat{\underbar{X}}  \right) =\mathsf{k}_{1}\!\left(\{\hat{X}_i  : i\in \mathcal{I}_{1} \},  A \hat{\underbar{X}}  \right) \notag%
\end{align}
for all $j\in[n]$ since 
\begin{align}
\mathsf{k}_{j}\!\left(\{X_i   : i\in \mathcal{I}_{j} \},  A \underbar{X}  \right) \notag=\mathsf{k}_{1}\!\left(\{X_i  : i\in \mathcal{I}_{1} \},  A \underbar{X}  \right) %
\end{align} with probability 1
by definition.  

Next, observe that:
\begin{align}
I\left(\mathsf{k}_{1}\!\left(\{\hat{X}_i : i\in \mathcal{I}_{1} \},  A \hat{\underbar{X}}  \right) ; A \hat{\underbar{X}} \right)
&=I\left(\mathsf{k}_{1}\!\left(\{\hat{X}_i + X_{\ell} \cdot v_i : i\in \mathcal{I}_{1} \},  A \underbar{X}'   \right) ; A \underbar{X}'  \right)\label{ident1}\\
&=I\left(\mathsf{k}_{1}\!\left(\{X_i  : i\in \mathcal{I}_{1} \},  A \underbar{X}   \right) ; A \underbar{X}  \right)=0\label{ident2}%
\end{align} 
In the above,
\begin{itemize}
\item \eqref{ident1} follows since $A \underbar{X}' = A (\hat{\underbar{X}}+X_{\ell}\cdot \underbar{v}) = A \hat{\underbar{X}}$, and $v_i=0$ for all $i\in \mathcal{I}_1$.
\item \eqref{ident2} follows from the (crucial) observation that  $\underbar{X}'$ and $\underbar{X}$ are equal in distribution, and by definition of $A$ and $\mathsf{k}_1$.
\end{itemize}

Finally, by similar reasoning, we note that the random variable $\mathsf{k}_{1}\!\left(\{\hat{X}_i : i\in \mathcal{I}_{1} \},  A \hat{\underbar{X}}  \right)$ is equiprobable on $\mathbb{F}$ since 
\begin{align}
\mathsf{k}_{1}\!\left(\{\hat{X}_i : i\in \mathcal{I}_{1} \},  A \hat{\underbar{X}}  \right)
&=\mathsf{k}_{1}\!\left(\{\hat{X}_i + X_{\ell} \cdot v_i : i\in \mathcal{I}_{1} \},  A \underbar{X}'   \right) \\
&\overset{d}{=}\mathsf{k}_{1}\!\left(\{{X}_i  : i\in \mathcal{I}_{1} \},  A \underbar{X}   \right),\label{equalInDist}
\end{align}
and $\mathsf{k}_{1}\!\left(\{{X}_i  : i\in \mathcal{I}_{1} \},  A \underbar{X}   \right)$ is equiprobable on $\mathbb{F}$ by definition.  In \eqref{equalInDist}, the notation $\overset{d}{=}$ indicates equality in distribution.

Therefore, we can conclude that a SK can be generated by the subfamily $\{\mathcal{I}_j- \ell\}_j$ %
by applying the protocol $\prot_{\mathrm{L}}$ and fixing $X_{\ell} \equiv 0$.  This proves that $\mathsf{S}_{\mathrm{L}}\!\left(\{\mathcal{I}_j - \ell\} \right) \leq \mathsf{S}_{\mathrm{L}}\!\left(\{\mathcal{I}_j\} \right)$.  By  Lemma \ref{lem:monotone}, the reverse inequality also holds.%
\end{proof}

In order to proceed, we will need to introduce  \emph{critical} families.
To this end, let $\tau\geq1$ be an integer.  A family $\{\mathcal{I}_j\}$ is \emph{$\tau$-critical} if the following hold:
\begin{enumerate}
\item[(i)] $\left| \cup_{j} \mathcal{I}_j \right| - \mathsf{M}^{\star}\!\left(\{\mathcal{I}_j\}\right) = \tau$, and
\item[(ii)] $ \mathsf{M}^{\star}\!\left(\{\mathcal{I}_j-i\}\right)= \mathsf{M}^{\star}\!\left(\{\mathcal{I}_j\}\right)$
for all $i\in  \cup_j \mathcal{I}_j$.
\end{enumerate}
 It is interesting to note that $\tau$-criticality of $\{\mathcal{I}_j\}$ can be efficiently tested since $\mathsf{M}^{\star}\!\left(\{\mathcal{I}_j\}\right)$ is computable in polynomial time.  Observe that $1$-critical families enjoy a threshold property: families $\{\mathcal{I}_j\}$ that are $1$-critical generate a secret key, and  no proper subfamilies of  $\{\mathcal{I}_j \}$ generate SKs.  This is a consequence of Theorem \ref{thm:secrecyCapacity} and the definition of $1$-criticality.%

A \emph{minimum $\tau$-critical subfamily} $\{\mathcal{J}_j^{\star}\}$ of $\{\mathcal{I}_j\}$ satisfies
\begin{align}
\left|\cup_j \mathcal{J}^{\star}_j \right| \leq \left|\cup_j \mathcal{J}_j \right|  \label{cardRelation}
\end{align}
for all other $\tau$-critical subfamilies $\{\mathcal{J}_j\}$ of $\{\mathcal{I}_j\}$.  Note that if $\{\mathcal{I}_j\}$ is $1$-critical, then $\{\mathcal{I}_j\}$ is its own unique minimum $1$-critical subfamily.

The following Theorem demonstrates that {minimum $1$-critical subfamilies} completely characterize $\mathsf{S}_{\mathrm{L}}\!\left(\{\mathcal{I}_j\} \right)$.
\begin{theorem}\label{thm:SLminimumOneCritical}
If $\{\mathcal{I}_j\}$ generates a SK, then 
\begin{align}
\mathsf{S}_{\mathrm{L}}\!\left(\{\mathcal{I}_j\} \right) = \mathsf{M}^{\star}\!\left(\{\mathcal{J}^{\star}_j\}\right) = \left|\cup_j \mathcal{J}^{\star}_j \right| - 1,
\end{align}
 where $\{\mathcal{J}_j^{\star}\}$ is a minimum $1$-critical subfamily of $\{\mathcal{I}_j\}$.
\end{theorem}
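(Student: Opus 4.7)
My approach is to handle the two equalities separately. The identity $\mathsf{M}^{\star}\!\left(\{\mathcal{J}^{\star}_j\}\right) = \left|\cup_j \mathcal{J}^{\star}_j \right| - 1$ is immediate from condition (i) of the $1$-critical definition, so the work is in establishing $\mathsf{S}_{\mathrm{L}}\!\left(\{\mathcal{I}_j\} \right) = \left|\cup_j \mathcal{J}^{\star}_j \right| - 1$. The upper bound is easy: any $1$-critical family has $|\cup| = \mathsf{M}^{\star}+1$, so by Theorem \ref{thm:secrecyCapacity} it generates a SK, and Lemma \ref{lem:SLleqM} gives $\mathsf{S}_{\mathrm{L}}\!\left(\{\mathcal{J}^{\star}_j\}\right) \leq \mathsf{M}^{\star}\!\left(\{\mathcal{J}^{\star}_j\}\right)$. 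Chaining through inequality \eqref{monotonSL} of Lemma \ref{lem:monotone} yields $\mathsf{S}_{\mathrm{L}}\!\left(\{\mathcal{I}_j\} \right) \leq \mathsf{S}_{\mathrm{L}}\!\left(\{\mathcal{J}^{\star}_j\}\right) \leq |\cup_j \mathcal{J}^{\star}_j| - 1$.

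For the nontrivial lower bound, I will construct, by greedy element deletion starting from $\{\mathcal{I}_j\}$, a $1$-critical subfamily $\{\mathcal{J}_j\}$ satisfying $|\cup_j \mathcal{J}_j| - 1 = \mathsf{S}_{\mathrm{L}}\!\left(\{\mathcal{I}_j\}\right)$; minimality of $\{\mathcal{J}^{\star}_j\}$ will then force $|\cup_j \mathcal{J}^{\star}_j| - 1 \leq \mathsf{S}_{\mathrm{L}}\!\left(\{\mathcal{I}_j\}\right)$, closing the gap. The deletion runs in two phases, both of which maintain the invariant that the current family's $\mathsf{S}_{\mathrm{L}}$ equals $\mathsf{S}_{\mathrm{L}}\!\left(\{\mathcal{I}_j\}\right)$. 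Phase~1 drives $\mathsf{M}^{\star}$ down to $\mathsf{S}_{\mathrm{L}}$: while $\mathsf{S}_{\mathrm{L}} < \mathsf{M}^{\star}$, Lemma \ref{lem:LinearRemoval} supplies an element whose deletion preserves $\mathsf{S}_{\mathrm{L}}$, and the phase terminates because $|\cup|$ strictly decreases at each step. Phase~2 drives $|\cup|$ down to $\mathsf{M}^{\star} + 1$: while $|\cup| > \mathsf{M}^{\star} + 1$, I remove an element $i$ with $\mathsf{M}^{\star}\!\left(\{\mathcal{J}_j - i\}\right) = \mathsf{M}^{\star}\!\left(\{\mathcal{J}_j\}\right)$. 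The key claim is that such an $i$ must exist: arguing by contradiction, if every removal strictly decreased $\mathsf{M}^{\star}$ then for each $i$ the subfamily $\{\mathcal{J}_j - i\}$ would still generate a SK (since $|\cup(\mathcal{J}_j - i)| > \mathsf{M}^{\star}\!\left(\{\mathcal{J}_j\}\right) > \mathsf{M}^{\star}\!\left(\{\mathcal{J}_j - i\}\right)$), so Lemma \ref{lem:SLleqM} would give $\mathsf{S}_{\mathrm{L}}\!\left(\{\mathcal{J}_j - i\}\right) \leq \mathsf{M}^{\star}\!\left(\{\mathcal{J}_j - i\}\right) < \mathsf{M}^{\star}\!\left(\{\mathcal{J}_j\}\right) = \mathsf{S}_{\mathrm{L}}\!\left(\{\mathcal{J}_j\}\right)$, contradicting \eqref{monotonSL}. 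A short sandwich argument --- combining Lemma \ref{lem:monotone}, Lemma \ref{lem:SLleqM}, and the preserved $\mathsf{M}^{\star}$ --- then shows the phase~2 removal itself preserves $\mathsf{S}_{\mathrm{L}}$.

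At termination, $\{\mathcal{J}_j\}$ satisfies $|\cup_j \mathcal{J}_j| = \mathsf{M}^{\star}\!\left(\{\mathcal{J}_j\}\right) + 1$ and $\mathsf{S}_{\mathrm{L}}\!\left(\{\mathcal{J}_j\}\right) = \mathsf{S}_{\mathrm{L}}\!\left(\{\mathcal{I}_j\}\right) = \mathsf{M}^{\star}\!\left(\{\mathcal{J}_j\}\right)$. Condition (i) of $1$-criticality holds by construction, and condition (ii) follows from exactly the contradiction template used in phase~2: any element whose removal strictly dropped $\mathsf{M}^{\star}$ would violate $\mathsf{S}_{\mathrm{L}}$-monotonicity via Lemma \ref{lem:SLleqM}. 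Hence $\{\mathcal{J}_j\}$ is $1$-critical with $|\cup_j \mathcal{J}_j| - 1 = \mathsf{S}_{\mathrm{L}}\!\left(\{\mathcal{I}_j\}\right)$, finishing the proof. I anticipate the main obstacle to be justifying phase~2, namely that an $\mathsf{M}^{\star}$-preserving element always exists once $\mathsf{S}_{\mathrm{L}} = \mathsf{M}^{\star}$ and $|\cup| > \mathsf{M}^{\star} + 1$; this is the step where the three quantities $\mathsf{S}_{\mathrm{L}}$, $\mathsf{M}^{\star}$, and $|\cup|$ become most tightly coupled and where Lemma \ref{lem:LinearRemoval} alone no longer suffices.
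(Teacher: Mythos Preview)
Your proposal is correct and follows essentially the same approach as the paper: both iterate Lemma~\ref{lem:LinearRemoval} until reaching a subfamily $\{\mathcal{T}_j\}$ with $\mathsf{S}_{\mathrm{L}}(\{\mathcal{I}_j\})=\mathsf{M}^{\star}(\{\mathcal{T}_j\})$, then pass to a $1$-critical subfamily and compare with $\{\mathcal{J}^{\star}_j\}$ via minimality. The paper is marginally more economical in that, once $\{\mathcal{T}_j\}$ is in hand, it simply lets $\{\mathcal{J}_j\}$ be \emph{any} $1$-critical subfamily of $\{\mathcal{T}_j\}$ and closes with the chain $\mathsf{S}_{\mathrm{L}}(\{\mathcal{I}_j\}) \leq \mathsf{M}^{\star}(\{\mathcal{J}^{\star}_j\}) \leq \mathsf{M}^{\star}(\{\mathcal{J}_j\}) \leq \mathsf{M}^{\star}(\{\mathcal{T}_j\}) = \mathsf{S}_{\mathrm{L}}(\{\mathcal{I}_j\})$, so it never needs to track $\mathsf{S}_{\mathrm{L}}$ through your Phase~2; conversely, your Phase~2 makes explicit the existence of such a $1$-critical subfamily, which the paper leaves implicit.
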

\begin{proof}
By inductively applying Lemma \ref{lem:LinearRemoval}, we can find a subfamily $\{\mathcal{T}_j\}$ of $\{\mathcal{I}_j\}$ for which 
\begin{align}
\mathsf{S}_{\mathrm{L}}\!\left(\{\mathcal{I}_j\} \right) = \mathsf{M}^{\star}\!\left(\{\mathcal{T}_j\}\right).\label{inductiveStep}
\end{align}
Let $\{\mathcal{J}_j \}$ be any $1$-critical subfamily of $\{\mathcal{T}_j\}$. We have the following chain of inequalities
\begin{align}
\mathsf{S}_{\mathrm{L}}\!\left(\{\mathcal{I}_j\} \right) \leq \mathsf{S}_{\mathrm{L}}\!\left(\{\mathcal{J}^{\star}_j\} \right)&\leq \mathsf{M}^{\star}\!\left(\{\mathcal{J}^{\star}_j\}\right) \label{step2}\\
&\leq \mathsf{M}^{\star}\!\left(\{\mathcal{J}_j\}\right) \label{step3}\\
&\leq \mathsf{M}^{\star}\!\left(\{\mathcal{T}_j\}\right)\label{step4}\\
&=\mathsf{S}_{\mathrm{L}}\!\left(\{\mathcal{I}_j\} \right).\label{step5}
\end{align}
The above steps can be justified as follows:
\begin{itemize}
\item \eqref{step2} follows from Lemmas \ref{lem:SLleqM} and \ref{lem:monotone}.
\item By definition of $\tau$-criticality, \eqref{cardRelation} is equivalent to 
$\mathsf{M}^{\star}\!\left(\{\mathcal{J}^{\star}_j\}\right) \leq \mathsf{M}^{\star}\!\left(\{\mathcal{J}_j\}\right)$. Thus, \eqref{step3} follows since $\{\mathcal{J}_j^{\star}\}$ is a minimum $1$-critical subfamily of $\{\mathcal{I}_j\}$, and $\{\mathcal{J}_j\}$ is a $1$-critical subfamily of $\{\mathcal{I}_j\}$. 
\item \eqref{step4} follows from Lemma \ref{lem:monotone}.
\item \eqref{step5} is the assertion of \eqref{inductiveStep}.
\end{itemize}
This proves that $\mathsf{S}_{\mathrm{L}}\!\left(\{\mathcal{I}_j\} \right) = \mathsf{M}^{\star}\!\left(\{\mathcal{J}^{\star}_j\}\right)$.  Recalling the definition of $1$-criticality completes the proof.
\end{proof}

The network defined by $\{\mathcal{I}_j\}$ has a natural representation as a hypergraph\footnote{We adopt the definition of a hypergraph that allows for repeated edges (i.e., multiple edges, with the same set of vertices, are permitted.}.  In particular, we make the following definition:
\begin{definition}
Consider a hypergraph $H=(\mathcal{V},\mathcal{E})$ with  vertex set $\mathcal{V}=\mathcal{C}$, and  edge set $\mathcal{E}=\cup_j \mathcal{I}_j$. $H$ is the \emph{hypergraph representation of $\{\mathcal{I}_j\}$} iff it has the following property: a vertex $c_j \in \mathcal{V}$ is contained in the edge $e \in \mathcal{E}$ if and only if $e \in \mathcal{I}_j$. 
\end{definition}

Theorem \ref{thm:SLminimumOneCritical} implies that $\mathsf{S}_{\mathrm{L}}\!\left(\{\mathcal{I}_j\} \right)$ is easily computed if we can identify a {minimum $1$-critical subfamily} of $\{\mathcal{I}_j\}$.  By Theorem \ref{thm:NPHard}, we know this must be \textsf{NP}-hard.  In order to prove this to be the case, we require the following lemma which lends a hypergraph interpretation to $1$-criticality.  For a hypergraph $H=(\mathcal{V},\mathcal{E})$, an edge set $\mathcal{E}'\subseteq \mathcal{E}$ is  a \emph{minimal connected dominating edge set} if the subhypergraph $H'=(\mathcal{V},\mathcal{E}')$ is connected, and the removal of any edge from $\mathcal{E}'$ disconnects $H'$.

\begin{lemma}\label{lem:HyperGraphConnection}
Let $H=(\mathcal{V},\mathcal{E})$ be the hypergraph representation of $\{\mathcal{I}_j\}$.  $H$ is connected if and only if
\begin{align}
\mathsf{M}^{\star}\!\left(\{\mathcal{I}_j\}\right) < \left|\cup_j \mathcal{I}_j \right| .
\end{align}
In particular, $\{\mathcal{I}_j\}$ is $1$-critical if and only if $\mathcal{E}$ is a minimal connected dominating edge set. 
\end{lemma}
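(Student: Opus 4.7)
First I would prove the first equivalence ($H$ connected iff $\mathsf{M}^{\star}(\{\mathcal{I}_j\}) < |\cup_j \mathcal{I}_j|$) and then deduce the second statement from it. For the deduction: condition (i) of $1$-criticality directly gives $\mathsf{M}^{\star}(\{\mathcal{I}_j\}) = |\cup_j \mathcal{I}_j| - 1 < |\cup_j \mathcal{I}_j|$, so by the first equivalence $H$ is connected. Condition (ii), combined with Lemma \ref{lem:monotone} (which gives $\mathsf{M}^{\star}(\{\mathcal{I}_j - i\}) \leq \mathsf{M}^{\star}(\{\mathcal{I}_j\})$) and the identity $|\cup_j(\mathcal{I}_j - i)| = |\cup_j \mathcal{I}_j| - 1$, yields $\mathsf{M}^{\star}(\{\mathcal{I}_j - i\}) = |\cup_j(\mathcal{I}_j - i)|$, so by the first equivalence applied to the deleted subfamily, $H \setminus i$ is disconnected for every $i \in \mathcal{E}$. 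Together these are precisely the conditions that $\mathcal{E}$ is a minimal connected dominating edge set; the converse is entirely symmetric using the same two applications of the first equivalence together with Lemma \ref{lem:monotone}.

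For the easy ``disconnected $\Rightarrow \mathsf{M}^{\star} \geq |\cup_j \mathcal{I}_j|$'' direction of the first equivalence, I would fix a partition $\mathcal{V} = \mathcal{V}_1 \sqcup \mathcal{V}_2$ witnessing the disconnection (no edge crossing) and set $\mathcal{S} = \{j : c_j \in \mathcal{V}_1\}$. Since every message lives entirely on one side, $|\cap_{j \in \bar{\mathcal{S}}} \bar{\mathcal{I}_j}| + |\cap_{j \in \mathcal{S}} \bar{\mathcal{I}_j}| = |\cup_j \mathcal{I}_j|$; summing the two ILP constraints from \eqref{ILP} for $\mathcal{S}$ and $\bar{\mathcal{S}}$ then gives $\sum_j a_j \geq |\cup_j \mathcal{I}_j|$ for every feasible $(a_j)$.

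For the reverse ``connected $\Rightarrow \mathsf{M}^{\star} < |\cup_j \mathcal{I}_j|$'' direction I would proceed by induction on $m = |\mathcal{E}|$. The base $m = 1$ is immediate: connectedness forces the unique edge to cover all vertices, so every client already holds the single message and $\mathsf{M}^{\star} = 0$. For the inductive step, if $H$ has some edge $e^*$ whose removal keeps $H$ connected, the hypothesis applied to $\{\mathcal{I}_j - e^*\}$ gives $\mathsf{M}^{\star}(\{\mathcal{I}_j - e^*\}) \leq m - 2$, and appending one broadcast of $X_{e^*}$ by any client holding it bounds $\mathsf{M}^{\star}(\{\mathcal{I}_j\}) \leq \mathsf{M}^{\star}(\{\mathcal{I}_j - e^*\}) + 1 \leq m - 1$. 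Otherwise every edge is a bridge, $H$ is a ``hypertree'', and I would pick any bridge $e^*$, decompose $H \setminus e^*$ into its two connected components $H_1, H_2$, apply the inductive hypothesis separately to each, and knit the resulting protocols together using $e^*$ as the single crossing message.

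The hard part will be the bridge case, since a naive combination of the two component protocols overcounts the cross-component transmissions needed to propagate each side's messages to the other. A clean fallback I would adopt is to invoke the LP dual of \eqref{ILP} together with the LP-integrality implicit in the polynomial-time computability of $\mathsf{M}^{\star}$ noted after Theorem \ref{thm:UR}: the dual maximizes $\sum_{\mathcal{S}} y_{\mathcal{S}} |\cap_{j \in \bar{\mathcal{S}}} \bar{\mathcal{I}_j}|$ subject to $\sum_{\mathcal{S} \ni j} y_{\mathcal{S}} \leq 1$ and $y \geq 0$. Any integer dual optimum attaining value $|\cup_j \mathcal{I}_j|$ must be supported on a partition of $[n]$ into at least two classes such that every message lies inside a single class — precisely a disconnection of $H$ — which gives the contrapositive of the desired implication and avoids the hypertree construction altogether.
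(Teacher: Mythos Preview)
Your disconnected direction and your deduction of the $1$-criticality claim from the first equivalence both match the paper. The divergence is entirely in ``connected $\Rightarrow \mathsf{M}^\star < m$,'' and there your argument has real gaps. In the bridge case, removing a \emph{hyper}edge $e^*$ can leave more than two components (one per vertex of $e^*$ in the extreme), so the decomposition into ``two components $H_1,H_2$'' is not valid in general; and even with two components, the knitting is not cheap---clients on one side must learn all messages on the other, so a naive combination overshoots $m-1$ by $\Theta(m)$, exactly the overcounting you flag. Your LP-dual fallback aims at the right target but the stated justification is wrong: polynomial-time computability of $\mathsf{M}^\star$ does \emph{not} imply LP tightness or the existence of an integral, partition-supported dual optimum. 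What is actually needed is the partition characterization of $\mathsf{M}^\star$ from submodular optimization---the existence of a partition $\mathscr{P}^\star$ of $[n]$ with $(|\mathscr{P}^\star|-1)\,\mathsf{M}^\star = \sum_{V\in\mathscr{P}^\star}\bigl|\bigcap_{j\in V}\bar{\mathcal{I}_j}\bigr|$---which the paper does invoke (citing Schrijver), but only later, in the proof of the $\tau$-generalization Lemma~\ref{lem:HyperGraphConnection2}. With that structural fact in hand, $\mathsf{M}^\star = m$ forces $\sum_{V\in\mathscr{P}^\star}|\cup_{j\in V}\mathcal{I}_j| = m$, i.e.\ the parts have pairwise disjoint message sets, which witnesses a disconnection of $H$; so your fallback can be made rigorous, just not via the reasoning you gave.

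The paper itself bypasses both induction and duality with a direct construction: relabel so $\cup_j\mathcal{I}_j=[m]$ and use connectivity of $H$ to order the messages so that, for each $j\geq 2$, some client holds $X_j$ together with an already-processed $X_k$ (hence can recover $X_1$ from the earlier transmission $X_1+X_k$ and then broadcast $X_1+X_j$). The resulting $m-1$ transmissions $\{X_1+X_j\}_{j=2}^m$ let every client recover $X_1$, which is equiprobable and independent of the transmissions; thus $\{\mathcal{I}_j\}$ generates a SK and Theorem~\ref{thm:secrecyCapacity} yields $\mathsf{M}^\star(\{\mathcal{I}_j\})\leq m-1$ immediately.
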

\begin{proof}
First, suppose $H$ is not connected.  By definition, there must exist a nontrivial partition $\mathcal{V}=(\mathcal{S},\bar{\mathcal{S}})$ such that there is no edge $e\in \mathcal{E}$ which contains vertices from both $\mathcal{S}$ and 
$\bar{\mathcal{S}}$. Stated another way, $(\cup_{j\in \mathcal{S}} \mathcal{I}_j) \cap (\cup_{j\in \bar{\mathcal{S}}} \mathcal{I}_j) = \emptyset$.  Hence, ILP \eqref{ILP} includes the two constraints
\begin{align}
\sum_{j\in\mathcal{S}} a_j  &\geq \left|\bigcap_{j\in \bar{\mathcal{S}} } \bar{\mathcal{I}_j} \right| = \left|\bigcup_{j\in {\mathcal{S}} } {\mathcal{I}_j} \right|  \\
\sum_{j\in\bar{\mathcal{S}}} a_j  &\geq \left|\bigcap_{j\in {\mathcal{S}} } \bar{\mathcal{I}_j} \right|   = \left|\bigcup_{j\in \bar{\mathcal{S}} } {\mathcal{I}_j} \right|, 
\end{align}
the sum of which imply $\mathsf{M}^{\star}\!\left(\{\mathcal{I}_j\}\right) \geq |\cup_j \mathcal{I}_j |$.  By taking the contrapositive, we have proven 
\begin{align}
\mathsf{M}^{\star}\!\left(\{\mathcal{I}_j\}\right) < \left|\cup_j \mathcal{I}_j \right| ~~\Longrightarrow~~ H \mbox{~is connected.}
\end{align}

Next, suppose $H$ is connected, and assume without loss of generality that $\mathcal{E} = \cup_j \mathcal{I}_j  \triangleq \{1,2,\dots,m\}$.  Since $H$ is connected, there is a transmission protocol for which the entries of $\mathbf{T}(\underbar{X})$ are precisely $\{X_1 + X_j \}_{j=2}^m$. Indeed, by connectivity of $H$,  there must be some client $c$ initially holding $X_1$ and some $X_e$ (say, $X_2$ without loss of generality), and can therefore transmit $X_1+X_2$ during the first communication round.  By induction, assume that $\{X_1\!+\!X_j\}_{j=2}^{m-1}$ are transmitted during the first $m-2$ communication rounds (permuting indices of the $X_i$'s if necessary).  Again, by connectivity of $H$, there must be a client $c'$ which initially holds $X_m$ and $X_k$, where $k<m$.  Hence, in communication round $m-1$, client $c'$ can transmit $(X_1\!+\!X_k)-(X_k\!-\!X_m) = X_1\!+\!X_m$.
Noting that 
\begin{align}
(X_1,X_1\!+\!X_2, \dots, X_1\!+\!X_m)\overset{d}{=}(X_1,X_2, \dots, X_m),\notag
\end{align}
we have  $I(X_1;\mathbf{T}(\underbar{X}))=0$.  If client $c \in e\in \mathcal{E}$, then it can recover $X_1$ from the transmission $X_1+X_e$ by simply subtracting $X_e$.  Since $H$ is connected, each $c\in \mathcal{V}$ belongs to some edge in $\mathcal{E}$, and therefore all clients can recover $X_1$ losslessly.  Since $X_1$ is equiprobable on $\mathbb{F}$ by definition, we can conclude that $\{\mathcal{I}_j\}$ generates a SK.  Theorem \ref{thm:secrecyCapacity} asserts that we must have $\mathsf{M}^{\star}\!\left(\{\mathcal{I}_j\}\right) < \left|\cup_j \mathcal{I}_j \right|$, and we have proven
\begin{align}
\mathsf{M}^{\star}\!\left(\{\mathcal{I}_j\}\right) < \left|\cup_j \mathcal{I}_j \right| ~~\Longleftrightarrow~~ H \mbox{~is connected.}\label{IFFstatement}
\end{align}

We now prove the second claim.  To this end, suppose $\{\mathcal{I}_j\}$ is $1$-critical.  Then $\mathsf{M}^{\star}\!\left(\{\mathcal{I}_j\}\right) = \left|\cup_j \mathcal{I}_j \right|-1$, which implies $H$ is connected (and thus $\mathcal{E}$ is dominating) by \eqref{IFFstatement}.  Consider the subhypergraph $H'=(\mathcal{V},\mathcal{E}\backslash\{e\})$, which corresponds to the subfamily $\{\mathcal{I}_j-e\}$ of $\{\mathcal{I}_j\}$.  Since $\{\mathcal{I}_j\}$ is $1$-critical, we must have $\mathsf{M}^{\star}\!\left(\{\mathcal{I}_j-e\}\right) = \mathsf{M}^{\star}\!\left(\{\mathcal{I}_j\}\right)  = \left|\cup_j \mathcal{I}_j \right|-1 = \left|\cup_j (\mathcal{I}_j-e) \right|$.  By \eqref{IFFstatement}, $H'$ must be disconnected, and therefore $\mathcal{E}$ is a minimal connected dominating edge set. 

On the other hand, suppose $\mathcal{E}$ is a minimal connected dominating edge set.  Since $H$ is connected, \eqref{IFFstatement} implies
\begin{align}
\mathsf{M}^{\star}\!\left(\{\mathcal{I}_j\}\right) \leq \left|\cup_j \mathcal{I}_j \right|-1.
\end{align}
Since $\mathcal{E}$ is minimal, for any $e\in \mathcal{E}$, $H'=(\mathcal{V},\mathcal{E}\backslash\{e\})$ is disconnected, and  \eqref{IFFstatement} implies
\begin{align}
\mathsf{M}^{\star}\!\left(\{\mathcal{I}_j-e\}\right) \geq \left|\cup_j (\mathcal{I}_j-e) \right| =\left|\cup_j \mathcal{I}_j \right|-1.
\end{align}
Applying Lemma \ref{lem:monotone}, we must have $\mathsf{M}^{\star}\!\left(\{\mathcal{I}_j\}\right)=\mathsf{M}^{\star}\!\left(\{\mathcal{I}_j-e\}\right)$, and  $\left|\cup_j \mathcal{I}_j \right| - \mathsf{M}^{\star}\!\left(\{\mathcal{I}_j\}\right)=1$, which implies $\{\mathcal{I}_j\}$ is $1$-critical. 
 \end{proof}
 We are finally in a position to prove Theorem \ref{thm:NPHard}.
 \begin{proof}[Proof of Theorem \ref{thm:NPHard}]
Let $H=(\mathcal{V},\mathcal{E})$ be the hypergraph representation of $\{\mathcal{I}_j\}$. 
We can assume $\{\mathcal{I}_j\}$ generates a SK.
By Theorem \ref{thm:SLminimumOneCritical} and Lemma \ref{lem:HyperGraphConnection}, computing $\mathsf{S}_{\mathrm{L}}\!\left(\{\mathcal{I}_j\} \right)$ is equivalent to computing the the number of edges in  a minimum connected dominating edge set (i.e., a minimal connected dominating edge set with fewest possible edges).  It is easy to see that the \textsf{NP}-complete {\sc Set Cover Decision Problem} is a special case.  

Indeed, consider any subsets $\mathcal{A}_1,\mathcal{A}_2,\dots,\mathcal{A}_k$ whose union covers a finite set $\mathcal{U}$.  For $u'\notin \mathcal{U}$,  define $\mathcal{U}' = \mathcal{U}\cup\{u'\}$, and $\mathcal{A}_j' = \mathcal{A}_j\cup \{u'\}$ for $j\in[k]$.   Clearly, $\{\mathcal{A}_{j_i}\}_{i=1}^m$ is a minimum  cover of $\mathcal{U}$ if and only if  $\{\mathcal{A}'_{j_i}\}_{i=1}^m$ is a minimum  connected cover of $\mathcal{U}'$.
\end{proof}

\begin{remark}
Together, Theorem \ref{thm:SLminimumOneCritical} and Lemma \ref{lem:HyperGraphConnection} give a succinct characterization of $\mathsf{S}_{\mathrm{L}}\!\left(\{\mathcal{I}_j\} \right)$ in terms of hypergraph connectivity.  We extend this result to  the generation of multiple secret keys at the end of Section \ref{sec:multipleSK} using a stronger form of hypergraph connectivity.
\end{remark}

\subsection{Proof of Theorem \ref{thm:SLneqS} }

Before proving Theorem \ref{thm:SLneqS}, consider  the following constructive example: 
Let $n=7$, and consider the family $\{\mathcal{I}_j\}$ defined by $\mathcal{I}_1 = \{1,2,3,4\}$, and $\mathcal{I}_2, \dots ,\mathcal{I}_7$ are all ${4 \choose 2}$ distinct $2$-element subsets of $\{1,2,3,4\}$.  By direct computation, we find that $\{\mathcal{I}_j-\{1\}\}$ is a minimum $1$-critical subfamily, and hence
$\mathsf{S}_{\mathrm{L}}\!\left(\{\mathcal{I}_j\} \right) =   2$ by Theorem \ref{thm:SLminimumOneCritical}.
Suppose $\mathbb{F}=\{0,1,\alpha,\beta\}^2 = \mathsf{GF}(4)\!\times \!\mathsf{GF}(4)$.  Thus, we can express $X_j = (X_j^{(1)}, X_j^{(2)})$ for each $j=1,\dots,4$, where $X_j^{(1)}, X_j^{(2)}$ are mutually independent, each equiprobable on $\mathsf{GF}(4)$.  It is readily verified that the single transmission 
\begin{align}
\left(X_1^{(1)}\!\!+\! \alpha X_2^{(1)}\!\!+\!X_3^{(1)}, X_1^{(1)}\!\!+\!\beta X_2^{(1)}\!\!+\! X_4^{(1)}\right) \in \mathbb{F} 
\end{align}
by client $c_1$ permits reconstruction of the SK 
\begin{align}
\mathsf{k}_{1}\!\left(\{X_i  : i\in \mathcal{I}_{1} \},  \mathbf{T}(\underbar{X})   \right)= (X_3^{(1)},X_4^{(1)})\in \mathbb{F}
\end{align}
 at all clients.  Hence, we can conclude
$1 = \mathsf{S}\!\left(\{\mathcal{I}_j\} \right)  < \mathsf{S}_{\mathrm{L}}\!\left(\{\mathcal{I}_j\} \right) = \mathsf{M}^{\star}\!\left(\{\mathcal{I}_j\}\right) = 2$.

The above construction is a \emph{vector-linear} transmission protocol, and cannot be realized by a protocol which is linear over $\mathbb{F}$.  A natural question is whether it is possible to bound the gap between $\mathsf{S}\!\left(\{\mathcal{I}_j\} \right)$ and $\mathsf{S}_{\mathrm{L}}\!\left(\{\mathcal{I}_j\} \right)$.  As asserted by Theorem \ref{thm:SLneqS}, the answer to this is negative.  Indeed,  it is straightforward to generalize the previous construction and make the gap arbitrarily large.  

To this end, consider a network of $n=\binom{m}{2}+1$ clients such that $\mathcal{I}_1=[m]$ and the other $\binom{m}{2}$ clients possess distinct pairs of messages. Observe that the $1$-critical subfamilies of $\{\mathcal{I}_j\}$ are obtained by removing a single message -- i.e., if any two messages $m_1,m_2\in[m]$ are removed then the resulting hypergraph representation of $\{\mathcal{I}_j-\{m_1,m_2\}\}$ is no longer connected. This implies that $\mathsf{S}_{\mathrm{L}}\!\left(\{\mathcal{I}_j\} \right)=m-2$. To show that there exists a nonlinear scheme that can do better, we show that $\mathsf{M}^\star(\{\mathcal{I}_j\}) =  m-2$:
\begin{itemize}
\item To show that $\mathsf{M}^\star(\{\mathcal{I}_j\})\leq m-2$, let client $c_1$ transmit $m-2$ independent linear combinations of the messages. Provided the encoding matrix $A$ is full rank (e.g., a Vandermonde matrix), every other node can use its own pair of messages to recover the other $m-2$.
\item We note that $\mathsf{M}^\star(\{\mathcal{I}_j\})\geq \mathsf{S}_{\mathrm{L}}\!\left(\{\mathcal{I}_j\} \right) = m-2$ by Lemma \ref{lem:SLleqM}, and therefore $\mathsf{M}^\star(\{\mathcal{I}_j\})=m-2$ as claimed.
\end{itemize}
Now, we simply split the packets and apply the optimal transmission protocol over the first halves of the packets as we did previously. This vector-linear scheme generates a SK with $m/2-1$ transmissions, which is an improvement of $m/2-1$ transmissions over the best linear scheme.  Since $m$ was arbitrary, we have shown that the gap between  $\mathsf{S}\!\left(\{\mathcal{I}_j\} \right)$ and $\mathsf{S}_{\mathrm{L}}\!\left(\{\mathcal{I}_j\} \right)$ cannot be bounded in general, proving Theorem \ref{thm:SLneqS}.

\begin{remark}
Our proof that $\mathsf{S}\!\left(\{\mathcal{I}_j\} \right) < \mathsf{S}_{\mathrm{L}}\!\left(\{\mathcal{I}_j\} \right)$ is similar to the index coding problem, where the suboptimality of linear schemes was also shown by demonstrating a gap between the performance of linear and vector-linear coding schemes \cite{alon2008broadcasting, lubetzky2009nonlinear}.  For several years, it was unknown whether vector-linear coding schemes were optimal in the index coding problem.  However, Blasiak et al. have since proved that even vector-linear coding is strictly suboptimal for the index coding problem \cite{blasiak2011lexicographic}.  We conjecture the same is true for the present setting.
\end{remark}

 \section{Generating Multiple Secret Keys}\label{sec:multipleSK}
Until now,  we have focused exclusively on protocols that generate a single SK.  However, it is also natural to consider protocols that generate $\tau$ independent secret keys. Indeed, the secrecy capacity as defined in \cite{bib:CsiszarNarayanIT2004} translates to the maximum number of secret keys that can possibly be generated in the combinatorial setting we consider.  Thus, it is interesting to study the tradeoff between the number of secret keys that can be generated and the number of public transmissions required to do so.

To this end, we say a transmission protocol $\mathbf{P}$ (with corresponding transmission sequence $\mathbf{T}(\underbar{X})$) generates $\tau$ secret keys if there exist decoding functions $\{\mathsf{k}_1,\mathsf{k}_2,\dots, \mathsf{k}_n\}$ which satisfy the following three properties:
\begin{enumerate}
\item[(i)] For all $j\in [n]$, and with probability 1, 
$$
\!\!\!\!\!\!\!\!\mathsf{k}_{j}\!\left(\{X_i   : i\in \mathcal{I}_{j} \},  \mathbf{T}(\underbar{X})  \right) \notag=\mathsf{k}_{1}\!\left(\{X_i  : i\in \mathcal{I}_{1} \},  \mathbf{T}(\underbar{X})   \right).
$$
\item[(ii)] $\mathsf{k}_{1}\!\left(\{X_i  : i\in \mathcal{I}_{1} \},  \mathbf{T}(\underbar{X})   \right)$ is equiprobable on $\mathbb{F}^\tau$.
\item[(iii)] $I\left(\mathsf{k}_{1}\!\left(\{X_i  : i\in \mathcal{I}_{1} \},  \mathbf{T}(\underbar{X})   \right);  \mathbf{T}(\underbar{X})  \right)=0$.
\end{enumerate} 

Note that (i)--(iii) are  the same requirements for generating a single SK with one exception: we require that $\mathsf{k}_{1}\!\left(\{X_i  : i\in \mathcal{I}_{1} \},  \mathbf{T}(\underbar{X})   \right)$ is uniformly over $\mathbb{F}^\tau$.  In other words, we require that each client  recovers $\tau$ independent SKs,  each known to all clients and private from any eavesdropper. As stated in  {\cite[Theorem 6]{bib:CourtadeWesel2013}}, Theorem \ref{thm:secrecyCapacity} can be generalized as follows: 
\begin{theorem}\label{thm:secrecyCapacityTau}
For a network defined by $\{\mathcal{I}_j\}$, there exists a protocol $\mathbf{P}$ which generates $\tau$  SKs if and only if
\begin{align}
\left| \cup_{j} \mathcal{I}_j \right| \geq \mathsf{M}^{\star}\!\left(\{\mathcal{I}_j\}\right)+\tau. \label{eqn:nonzeroCSKTau}
\end{align}
 \end{theorem}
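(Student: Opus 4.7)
The plan is to prove the two directions of the equivalence separately. The achievability direction (sufficiency) follows by a linear-algebraic construction that generalizes the proof of Theorem \ref{thm:secrecyCapacity}. The converse direction (necessity) is more delicate, since Theorem \ref{thm:SLneqS} forbids restricting attention to linear protocols.

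\emph{Achievability.} Suppose $m \triangleq |\cup_j \mathcal{I}_j| \geq \mathsf{M}^{\star} + \tau$. By Theorem \ref{thm:UR}, fix a linear omniscience-achieving protocol $\prot_{\mathrm{L}}$ with encoding matrix $A \in \mathbb{F}^{\mathsf{M}^{\star} \times m}$; the omniscience condition forces $\mathrm{rank}(A) = \mathsf{M}^{\star}$, for otherwise one could discard linearly dependent rows and achieve omniscience with strictly fewer than $\mathsf{M}^{\star}$ transmissions. Since $\mathsf{M}^{\star} + \tau \leq m$, I can select $B \in \mathbb{F}^{\tau \times m}$ whose rows, together with those of $A$, form a linearly independent set, so that the matrix obtained by stacking $B$ below $A$ has rank $\mathsf{M}^{\star} + \tau$. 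Define the key $K = B\underbar{X}$. Condition (i) holds because every client first recovers $\underbar{X}$ from $T = A\underbar{X}$ via the decoder of $\prot_{\mathrm{L}}$ and then evaluates $K$; (ii) holds because $\mathrm{rank}(B) = \tau$ and $\underbar{X}$ is uniform on $\mathbb{F}^m$, so $K$ is uniform on $\mathbb{F}^\tau$; and (iii) holds because the rank condition makes $(T,K)$ uniform on $\mathbb{F}^{\mathsf{M}^{\star}+\tau}$, which forces independence of its marginals.

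\emph{Converse.} Suppose $\prot$ generates $\tau$ SKs with transmissions $T$ and key $K \in \mathbb{F}^\tau$. Two entropy identities are immediate: from $I(K;T)=0$ and $K$ uniform on $\mathbb{F}^\tau$,
\begin{align*}
H(T,K) = H(T) + \tau\log|\mathbb{F}|,
\end{align*}
and from the fact that $(T,K)$ is a function of $\underbar{X}$, $H(T,K) \leq H(\underbar{X}) = m\log|\mathbb{F}|$. Combining gives $H(T) \leq (m-\tau)\log|\mathbb{F}|$. To finish, I extend $\prot$ by additional public transmissions to obtain an omniscience-achieving protocol $\prot'$, invoke Theorem \ref{thm:UR} to get $\|\prot'\| \geq \mathsf{M}^{\star}$, and then exploit the independence $I(K;T)=0$ to disentangle the secrecy budget from the omniscience budget, thereby forcing $\mathsf{M}^{\star} + \tau \leq m$.

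The main obstacle is the converse: showing carefully that the $\tau$ degrees of freedom carrying the key are disjoint, in an information-theoretic sense, from the $\mathsf{M}^{\star}$ degrees of freedom that any omniscience-achieving extension must transmit. Because Theorem \ref{thm:SLneqS} rules out a linear-only analysis, this disentanglement must hold for arbitrary (possibly nonlinear) protocols. I expect the cleanest route to mirror the combinatorial analog of Csisz\'ar--Narayan's secrecy-capacity converse, with integer-valued quantities and zero-error reconstruction/secrecy constraints replacing the rate/typical-set arguments of the asymptotic setting, and with the cut inequalities from ILP \eqref{ILP} providing the needed connection between transmissions and $\mathsf{M}^{\star}$.
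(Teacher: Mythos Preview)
First, a framing remark: the paper does not supply its own proof of this theorem. It is quoted verbatim from \cite[Theorem 6]{bib:CourtadeWesel2013}, and the only hint in the present paper about the argument is the sentence in the proof of Lemma~\ref{lem:SLleqM} stating that the cited proof constructs a linear protocol generating a SK in $\mathsf{M}^{\star}(\{\mathcal{I}_j\})$ rounds. So your achievability direction is exactly the construction alluded to, and it is correct as written.

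Your converse, however, has a genuine gap, and it is precisely at the ``disentangle'' step. Extending $\prot$ to an omniscience protocol $\prot'$ and invoking Theorem~\ref{thm:UR} yields $\|\mathbf{T}(\underbar{X},\prot')\|\geq \mathsf{M}^{\star}$, a \emph{lower} bound on a transmission count. What you need is an \emph{upper} bound on $\mathsf{M}^{\star}$, namely $\mathsf{M}^{\star}\leq m-\tau$. Your entropy identity $H(T)\leq(m-\tau)\log|\mathbb{F}|$ bounds the entropy of the original transmissions, not the count of the extended ones, and there is no bridge between the two: you do not know that $K$ remains independent of $(T,T')$ after the extension (in general it will not), so you cannot conclude $H(T,T')\leq(m-\tau)\log|\mathbb{F}|$; and even if you could, that entropy bound together with $\|\prot'\|\geq\mathsf{M}^{\star}$ still does not force $\mathsf{M}^{\star}\leq m-\tau$. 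The extension-to-omniscience idea simply points the inequality in the wrong direction.

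The argument in the cited reference follows the route you yourself identify in your final paragraph. One applies the Csisz\'ar--Narayan partition inequality in exact (non-asymptotic) form: for every partition $\mathscr{P}=\{B_1,\dots,B_k\}$ of $[n]$, the conditions $H(K\mid T, X_{\cup_{j\in B_i}\mathcal{I}_j})=0$ and $I(K;T)=0$ yield
\[
(k-1)\,\tau\log|\mathbb{F}| \;\leq\; \sum_{i=1}^{k}\Bigl|\bigcup_{j\in B_i}\mathcal{I}_j\Bigr|\log|\mathbb{F}| \;-\; m\log|\mathbb{F}|,
\]
which, via the LP dual of \eqref{ILP}, gives $\tau\leq m-\mathsf{M}^{\star}_{\mathrm{LP}}$, where $\mathsf{M}^{\star}_{\mathrm{LP}}$ is the optimum of the LP relaxation. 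The last step --- and a point your proposal does not mention --- is that one needs the CCDE integrality result $\mathsf{M}^{\star}=\lceil \mathsf{M}^{\star}_{\mathrm{LP}}\rceil$ (established in the references cited around Theorem~\ref{thm:UR}) together with $\tau\in\mathbb{Z}$ to conclude $\tau\leq m-\mathsf{M}^{\star}$. Without that rounding argument the entropy converse only reaches the LP value, which can be strictly smaller than $\mathsf{M}^{\star}$ (e.g., three clients each holding two of three messages, where $\mathsf{M}^{\star}_{\mathrm{LP}}=3/2$ but $\mathsf{M}^{\star}=2$).
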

Analogous to the definition of $\mathsf{S}_{\mathrm{L}}\!\left(\{\mathcal{I}_j\} \right)$ in  \eqref{SLdefn}, let   $\mathsf{S}^{(\tau)}_{\mathrm{L}}\!\left(\{\mathcal{I}_j\} \right)$ denote the minimum number of transmissions required by a  linear protocol to generate $\tau$ independent secret keys.  
 A minor modification of our arguments for the single-SK setting yields:
 \begin{theorem}
Let $\tau\geq 1$ be an integer.  If there is a protocol $\prot$ for $\{\mathcal{I}_j\}$ which generates $\tau$ independent secret keys, then 
\begin{align}
\mathsf{S}^{(\tau)}_{\mathrm{L}}\!\left(\{\mathcal{I}_j\} \right) = \mathsf{M}^{\star}\!\left(\{\mathcal{J}^{\star}_j\}\right) = \left|\cup_j \mathcal{J}^{\star}_j \right| - \tau,
\end{align}
 where $\{\mathcal{J}_j^{\star}\}$ is a minimum $\tau$-critical subfamily of $\{\mathcal{I}_j\}$.
 \end{theorem}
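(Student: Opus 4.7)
The plan is to parallel the single-SK development of Section \ref{sec:Results} by first upgrading the three supporting lemmas (Lemmas \ref{lem:SLleqM}, \ref{lem:monotone}, and \ref{lem:LinearRemoval}) to the $\tau$-SK setting, and then rerunning the chain of inequalities used in the proof of Theorem \ref{thm:SLminimumOneCritical} verbatim. The first ingredient is the bound $\mathsf{S}^{(\tau)}_{\mathrm{L}}(\{\mathcal{I}_j\}) \leq \mathsf{M}^{\star}(\{\mathcal{I}_j\})$ whenever $\{\mathcal{I}_j\}$ can generate $\tau$ independent secret keys. This is obtained from the same omniscience-plus-key-extraction construction used in \cite[Proof of Theorem 6]{bib:CourtadeWesel2013}: after $\mathsf{M}^{\star}$ linear transmissions rendering all clients omniscient, the $m-\mathsf{M}^{\star}\geq\tau$ remaining degrees of freedom in $\underline{X}$ that are independent of the public transcript yield $\tau$ independent uniform symbols on $\mathbb{F}$, which form the desired key. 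Monotonicity, namely $\mathsf{S}^{(\tau)}_{\mathrm{L}}(\{\mathcal{J}_j\}) \geq \mathsf{S}^{(\tau)}_{\mathrm{L}}(\{\mathcal{I}_j\})$ for subfamilies, carries through verbatim by zeroing out the removed messages: the resulting transcript still satisfies conditions (i)--(iii) at the enlarged family.

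The crucial step is the $\tau$-analog of Lemma \ref{lem:LinearRemoval}: if $\mathsf{S}^{(\tau)}_{\mathrm{L}}(\{\mathcal{I}_j\}) < \mathsf{M}^{\star}(\{\mathcal{I}_j\})$, then there is some $\ell\in\cup_j\mathcal{I}_j$ with $\mathsf{S}^{(\tau)}_{\mathrm{L}}(\{\mathcal{I}_j-\ell\})=\mathsf{S}^{(\tau)}_{\mathrm{L}}(\{\mathcal{I}_j\})$. I would repeat the original proof: since the encoding matrix $A$ has strictly fewer than $\mathsf{M}^{\star}$ rows, Theorem \ref{thm:UR} implies some client (say $c_1$) cannot achieve omniscience, so there exists a nonzero $\underline{v}\in\ker A$ with $v_i=0$ for all $i\in\mathcal{I}_1$; pick any $\ell\notin\mathcal{I}_1$ with $v_\ell\neq 0$, set $\hat{X}_\ell\equiv 0$ and $\underline{X}'=\hat{\underline{X}}+X_\ell\underline{v}$. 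The identities $A\underline{X}'=A\hat{\underline{X}}$ and $\underline{X}'\stackrel{d}{=}\underline{X}$ then translate the three SK conditions for the original protocol into the three $\tau$-SK conditions for $\{\mathcal{I}_j-\ell\}$; the only difference from the $\tau=1$ case is that the uniformity assertion in condition (ii) is now on $\mathbb{F}^\tau$ instead of $\mathbb{F}$, but the distributional-equality argument $\mathsf{k}_1(\{\hat{X}_i:i\in\mathcal{I}_1\},A\hat{\underline{X}})\stackrel{d}{=}\mathsf{k}_1(\{X_i:i\in\mathcal{I}_1\},A\underline{X})$ is indifferent to the target alphabet.

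With these three generalizations in hand, the main argument mimics the proof of Theorem \ref{thm:SLminimumOneCritical}. Inductively applying the generalized removal lemma yields a subfamily $\{\mathcal{T}_j\}\subseteq\{\mathcal{I}_j\}$ with $\mathsf{S}^{(\tau)}_{\mathrm{L}}(\{\mathcal{I}_j\}) = \mathsf{M}^{\star}(\{\mathcal{T}_j\})$. Existence of a $\tau$-critical subfamily $\{\mathcal{J}_j\}\subseteq\{\mathcal{T}_j\}$ is guaranteed because $\{\mathcal{T}_j\}$ itself generates $\tau$ SKs (by Theorem \ref{thm:secrecyCapacityTau} applied via $|\cup_j\mathcal{T}_j|\geq\mathsf{M}^{\star}(\{\mathcal{T}_j\})+\tau$, where the last inequality follows because $\mathsf{S}^{(\tau)}_{\mathrm{L}}(\{\mathcal{T}_j\})\leq\mathsf{S}^{(\tau)}_{\mathrm{L}}(\{\mathcal{I}_j\})<\infty$), and one can remove messages one at a time, preserving the property of generating $\tau$ SKs, until no further removal is possible. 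Then
\begin{align}
\mathsf{S}^{(\tau)}_{\mathrm{L}}(\{\mathcal{I}_j\}) \leq \mathsf{S}^{(\tau)}_{\mathrm{L}}(\{\mathcal{J}^{\star}_j\})
&\leq \mathsf{M}^{\star}(\{\mathcal{J}^{\star}_j\}) \leq \mathsf{M}^{\star}(\{\mathcal{J}_j\}) \leq \mathsf{M}^{\star}(\{\mathcal{T}_j\}) = \mathsf{S}^{(\tau)}_{\mathrm{L}}(\{\mathcal{I}_j\}),\notag
\end{align}
where the second-to-last inequality uses the generalized monotonicity and the middle inequality is precisely the minimality of $\{\mathcal{J}^{\star}_j\}$ among $\tau$-critical subfamilies of $\{\mathcal{I}_j\}$. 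Equality throughout gives $\mathsf{S}^{(\tau)}_{\mathrm{L}}(\{\mathcal{I}_j\})=\mathsf{M}^{\star}(\{\mathcal{J}^{\star}_j\})$, and the second equality $\mathsf{M}^{\star}(\{\mathcal{J}^{\star}_j\})=|\cup_j\mathcal{J}^{\star}_j|-\tau$ is the definition of $\tau$-criticality.

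The main obstacle I anticipate is purely bookkeeping, localized in the generalized Lemma \ref{lem:LinearRemoval}: I must check carefully that condition (ii) — uniformity on the larger alphabet $\mathbb{F}^\tau$ — is preserved under the substitution $\hat{X}_\ell\equiv 0$. Since $\tau$ keys impose a strictly stronger uniformity requirement than one, I want to verify explicitly that no mass is lost, but the distributional-equivalence chain \eqref{equalInDist} carries the full law of $\mathsf{k}_1$ over (not merely its marginal entries), so $\mathbb{F}^\tau$-uniformity transfers automatically. Beyond that check, every other piece of the argument is a word-for-word rerun of the single-SK proofs.
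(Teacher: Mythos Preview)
Your proposal is correct and matches the paper's approach exactly: the paper itself offers no explicit proof of this theorem, stating only that ``a minor modification of our arguments for the single-SK setting yields'' the result, and your write-up supplies precisely that modification---upgrading Lemmas~\ref{lem:SLleqM}, \ref{lem:monotone}, and \ref{lem:LinearRemoval} to $\tau$ keys and rerunning the chain of inequalities from Theorem~\ref{thm:SLminimumOneCritical}. Your extra care in justifying the existence of a $\tau$-critical subfamily of $\{\mathcal{T}_j\}$ and in checking that the distributional-equality argument is insensitive to the target alphabet $\mathbb{F}^\tau$ fills in exactly the bookkeeping the paper leaves implicit.
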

 
In the single-SK setting, Lemma \ref{lem:HyperGraphConnection} gave a succinct interpretation of minimum 1-critical subfamilies of $\{\mathcal{I}_j\}$ as connectedness of $H$, the hypergraph representation of $\{\mathcal{I}_j\}$.  When combined with Theorem \ref{thm:SLminimumOneCritical}, we find that $\mathsf{S}_{\mathrm{L}}\!\left(\{\mathcal{I}_j\} \right)$ is in one-to-one correspondence with the size  of a minimum connected dominating edge-set of $H$. The chief difficulty in giving a similarly succinct characterization of $\mathsf{S}^{(\tau)}\!\left(\{\mathcal{I}_j\} \right)$ lies in generalizing Lemma \ref{lem:HyperGraphConnection} appropriately for $\tau\geq 2$. In order to do so, we will need to introduce a more general notion of hypergraph connectivity.

There are many definitions of connectivity for hypergraphs. We recall two common examples here:  
\begin{itemize}
\item \emph{Example 1:} A hypergraph is said to be $\tau$-edge connected if the deletion of fewer than $\tau$ edges leaves $H$ connected.  
\item \emph{Example 2:} A more stringent notion of connectivity is partition-connectivity \cite{frank2003decomposing}. A hypergraph $H$ is said to be $\tau$-partition connected if for all partitions $\mathscr{P}$ of the vertex set, the number of hyperedges intersecting at least two parts of $\mathscr{P}$ is at least $\tau(|\mathscr{P}|-1)$. 
\end{itemize}
We say that a  multigraph is $\tau$-partition connected if it contains $\tau$ edge-disjoint spanning trees.  This definition is justified by recalling a classical result  of Nash-Williams \cite{nash1961edge} and Tutte  \cite{tutte1961problem}.
\begin{theorem}\label{thm:NWT}
An undirected multigraph $G = (\mathcal{V},\mathcal{E})$ contains $\tau$ edge-disjoint spanning trees iff for every partition $\mathscr{P}$ of $\mathcal{V}$ into disjoint sets  $\mathcal{V}_1, \mathcal{V}_2, \dots, \mathcal{V}_{|\mathscr{P}|}$,
\begin{align}
\sum_{e\in \mathcal{E}} (r(e\,;\mathscr{P})-1) \geq  \tau(|\mathscr{P}|-1),
\end{align}
where $r(e\,;\mathscr{P})$ is the number of parts in $\mathscr{P}$ that the edge $e$ intersects (i.e., its rank with respect to the partition $\mathscr{P}$).
\end{theorem}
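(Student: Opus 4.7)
The plan is to split Theorem~\ref{thm:NWT} into its easy necessity direction and its harder sufficiency direction, and to handle the latter via Edmonds' matroid union theorem applied to $\tau$ copies of the graphic matroid of $G$.

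First I would dispatch necessity by a direct counting argument. In a multigraph every edge $e$ satisfies $r(e;\mathscr{P})\in\{1,2\}$, so $r(e;\mathscr{P})-1$ is the indicator that $e$ crosses between two parts of $\mathscr{P}$. If $T_1,\ldots,T_\tau$ are edge-disjoint spanning trees, then contracting each part of $\mathscr{P}$ to a single super-vertex leaves each $T_i$ as a connected spanning subgraph on $|\mathscr{P}|$ super-vertices, and therefore contributes at least $|\mathscr{P}|-1$ of its edges to the crossing set. Summing across the pairwise edge-disjoint $T_i$ gives at least $\tau(|\mathscr{P}|-1)$ crossing edges of $G$, which is precisely $\sum_{e\in\mathcal{E}}(r(e;\mathscr{P})-1)$.

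For sufficiency, let $M_G$ denote the graphic matroid of $G$ (whose independent sets are the forests of $G$) and let $M := M_G^{\vee\tau}$ be the matroid union of $\tau$ copies of $M_G$. Since every forest has at most $|\mathcal{V}|-1$ edges, the existence of $\tau$ edge-disjoint spanning trees is equivalent to $r_M(\mathcal{E})=\tau(|\mathcal{V}|-1)$. Edmonds' matroid union theorem yields
\begin{align*}
r_M(\mathcal{E}) \;=\; \min_{F\subseteq\mathcal{E}}\Bigl(|\mathcal{E}\setminus F| + \tau\cdot r_{M_G}(F)\Bigr),
\end{align*}
so it suffices to show this minimum is at least $\tau(|\mathcal{V}|-1)$ under the partition hypothesis. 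The key step is to replace an arbitrary $F$ by one associated to a partition: given $F\subseteq\mathcal{E}$, let $\mathscr{P}$ be the partition of $\mathcal{V}$ into connected components of $(\mathcal{V},F)$, and let $F_\mathscr{P}\supseteq F$ be the set of all edges of $G$ with both endpoints in a common part of $\mathscr{P}$. Adding edges within already-connected parts does not merge components, so $r_{M_G}(F)=r_{M_G}(F_\mathscr{P})=|\mathcal{V}|-|\mathscr{P}|$, while $|\mathcal{E}\setminus F_\mathscr{P}|\leq|\mathcal{E}\setminus F|$ and $|\mathcal{E}\setminus F_\mathscr{P}|=\sum_{e\in\mathcal{E}}(r(e;\mathscr{P})-1)$. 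Invoking the hypothesis on $\mathscr{P}$ then gives
\begin{align*}
|\mathcal{E}\setminus F| + \tau\cdot r_{M_G}(F) &\geq |\mathcal{E}\setminus F_\mathscr{P}| + \tau(|\mathcal{V}|-|\mathscr{P}|) \\
&\geq \tau(|\mathscr{P}|-1) + \tau(|\mathcal{V}|-|\mathscr{P}|) \;=\; \tau(|\mathcal{V}|-1),
\end{align*}
which closes the argument.

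The main obstacle is the matroid union machinery itself. A genuinely self-contained proof must either establish Edmonds' formula from scratch --- most cleanly via a matroid-intersection / augmenting-sequence argument, which is the technical heart of the sufficiency direction --- or imitate such an argument in the graphic case by exchanging edges between a candidate packing of forests to iteratively drive their total size up to $\tau(|\mathcal{V}|-1)$. By contrast, the partition-type reduction and the necessity direction are essentially bookkeeping once matroid union is available.
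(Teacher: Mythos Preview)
The paper does not prove Theorem~\ref{thm:NWT} at all: it is stated as a classical result of Nash-Williams and Tutte, with citations, and is used as a black box in the proof of Lemma~\ref{lem:HyperGraphConnection2} and in the discussion of Theorem~\ref{thm:inherentTau}. So there is no ``paper's own proof'' to compare against.

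That said, your argument is a correct and standard route to Nash--Williams/Tutte. The necessity direction is fine as written. For sufficiency, your reduction to Edmonds' matroid-union rank formula is exactly the textbook approach: the key identity $r_{M_G}(F)=|\mathcal{V}|-|\mathscr{P}|$ for $\mathscr{P}$ the component partition of $(\mathcal{V},F)$, together with the closure step $F\mapsto F_{\mathscr{P}}$, lets you rewrite the minimization over subsets as a minimization over partitions, at which point the hypothesis applies directly. You are also right to flag that the real work hides inside the matroid-union formula itself; if a self-contained proof were required, you would need to supply that (or the equivalent augmenting-exchange argument specialized to graphic matroids). For the purposes of this paper, however, none of that is needed: the theorem is simply cited.
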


\begin{definition}
A multigraph $G=(\mathcal{V},\mathcal{E}_M)$ is \emph{induced by a hypergraph} $H=(\mathcal{V},\mathcal{E})$ iff it can be decomposed into a disjoint collection of  simple graphs $\{G_e \}_{e \in \mathcal{E}}$, where $G_e=(e,E_e)$ is a connected graph on the vertex set $e\in \mathcal{E}$.%
\end{definition}
Two examples of multigraphs induced by a hypergraph are given in Figure \ref{fig:inducedH}. 

\begin{figure}
\begin{center}
\def\svgwidth{.95\textwidth}
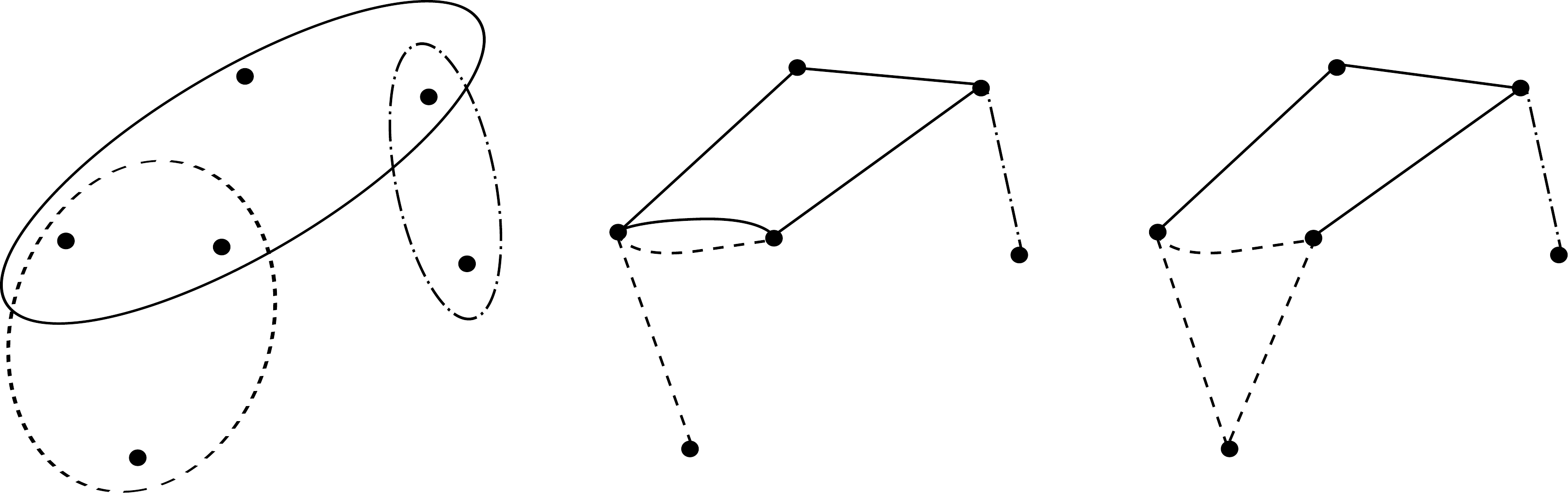
\end{center}
\caption{An example of a hypergraph $H$ (left) and two induced multigraphs (center, right).  Line textures are used to emphasize the relationship between the hypergraph edges and the decomposition of the multigraphs into corresponding simple connected graphs.}\label{fig:inducedH}
\end{figure}

\begin{figure}
\begin{center}
\def\svgwidth{.79\textwidth}
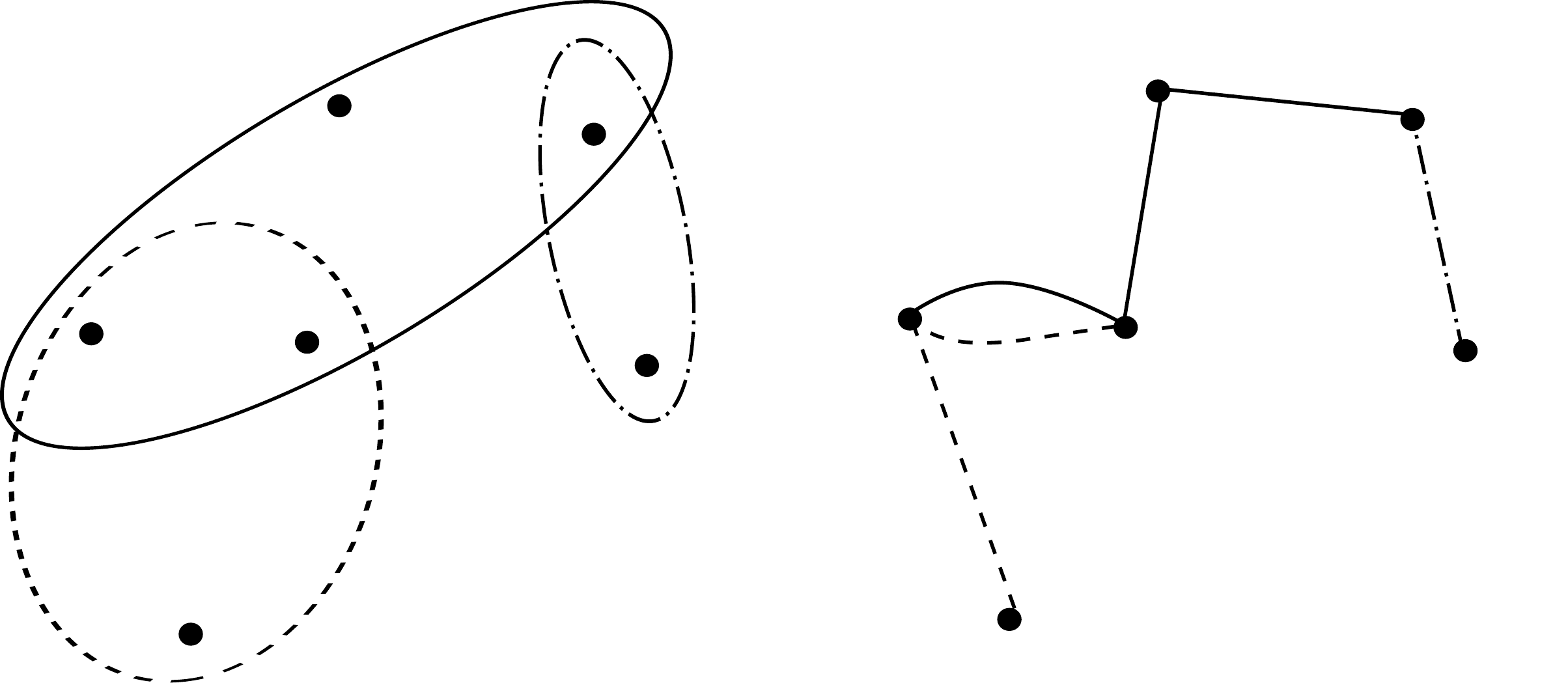
\end{center}
\caption{A hypergraph $H$ (left) and the induced multigraph $G_{H,\prec}$ (right) for the vertex-ordering $v_1 \prec v_2 \prec \cdots \prec v_6$. Line textures are used to emphasize the relationship between the hypergraph edges and the decomposition of the multigraphs into corresponding simple connected graphs.}\label{fig:Gorder}
\end{figure}

\begin{definition}
A hypergraph $H=(\mathcal{V},\mathcal{E})$ is \emph{inherently $\tau$-connected} iff every induced multigraph contains at least $\tau$ edge-disjoint spanning trees.
\end{definition}

A pleasant  generalization  of Theorem \ref{thm:NWT} holds for inherently $\tau$-connected hypergraphs.
\begin{theorem}\label{thm:inherentTau}
A hypergraph $H=(\mathcal{V},\mathcal{E})$ is inherently $\tau$-connected iff for any partition $\mathscr{P}$ of $\mathcal{V}$ into disjoint sets  $\mathcal{V}_1, \mathcal{V}_2, \dots, \mathcal{V}_{|\mathscr{P}|}$, 
\begin{align}
\sum_{e\in \mathcal{E}} (r(e\,;\mathscr{P})-1) \geq  \tau(|\mathscr{P}|-1),
\end{align}
where $r(e\,;\mathscr{P})$ is the number of parts in $\mathscr{P}$ that the hyperedge $e$ intersects.
\end{theorem}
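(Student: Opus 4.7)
The plan is to reduce both directions of Theorem~\ref{thm:inherentTau} to the classical Nash-Williams/Tutte result (Theorem~\ref{thm:NWT}) by exploiting a single elementary observation: for any connected graph $G_e=(e,E_e)$ and any partition $\mathscr{P}$ of $\mathcal{V}$, the number of edges in $E_e$ that cross the partition equals $\sum_{e'\in E_e}(r(e';\mathscr{P})-1)$ and is at least $r(e;\mathscr{P})-1$, since $G_e$ must connect the $r(e;\mathscr{P})$ nonempty ``traces'' of $\mathscr{P}$ on $e$.

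For the easier (sufficiency) direction, I assume the partition inequality holds for $H$ and pick an arbitrary induced multigraph $G=(\mathcal{V},\mathcal{E}_M)$ with decomposition $\{G_e\}_{e\in\mathcal{E}}$. For any partition $\mathscr{P}$, I chain inequalities
\begin{align*}
\sum_{e'\in \mathcal{E}_M}(r(e';\mathscr{P})-1)
 = \sum_{e\in\mathcal{E}}\sum_{e'\in E_e}(r(e';\mathscr{P})-1)
 \ \geq\ \sum_{e\in\mathcal{E}}(r(e;\mathscr{P})-1)
 \ \geq\ \tau(|\mathscr{P}|-1),
\end{align*}
where the first inequality uses the observation above applied to each connected $G_e$, and the second is the hypothesis. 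Nash-Williams/Tutte (Theorem~\ref{thm:NWT}) then delivers $\tau$ edge-disjoint spanning trees in $G$.

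For the necessity direction, I fix an arbitrary partition $\mathscr{P}$ and engineer an induced multigraph $G^\star$ that achieves equality in the first inequality above. Specifically, for each hyperedge $e$, let the traces be $e\cap \mathcal{V}_{i_1},\dots,e\cap \mathcal{V}_{i_r}$ with $r=r(e;\mathscr{P})$; I construct $G_e^\star$ by first choosing a spanning tree inside each trace and then adding exactly $r-1$ edges connecting the traces into one tree. The resulting $G_e^\star$ is a connected simple graph on $e$ with precisely $r(e;\mathscr{P})-1$ crossing edges, so $\sum_{e'\in E_{e}^\star}(r(e';\mathscr{P})-1)=r(e;\mathscr{P})-1$. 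Since $H$ is inherently $\tau$-connected, $G^\star$ has $\tau$ edge-disjoint spanning trees, whence Theorem~\ref{thm:NWT} yields
\begin{align*}
\tau(|\mathscr{P}|-1)\ \leq\ \sum_{e'\in\mathcal{E}_{M}^\star}(r(e';\mathscr{P})-1)\ =\ \sum_{e\in\mathcal{E}}(r(e;\mathscr{P})-1),
\end{align*}
which is exactly the desired partition inequality.

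The main obstacle is the necessity direction, and concretely the construction of $G^\star$: I need to verify that the described construction is always realizable (any trace, being a finite nonempty vertex set, admits a spanning tree; and any collection of trees on disjoint vertex sets can be joined by $r-1$ cross-edges chosen from $e$ to produce a single tree on $e$). Once this ``tight'' induced multigraph is in hand, everything else is a direct invocation of Theorem~\ref{thm:NWT}; the sufficiency direction is essentially bookkeeping.
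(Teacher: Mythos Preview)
Your proof is correct and follows essentially the same approach the paper indicates: both directions are reduced to Theorem~\ref{thm:NWT} via the observation that for any connected $G_e$ the number of crossing edges is at least $r(e;\mathscr{P})-1$, and for the necessity direction one constructs an induced multigraph achieving equality. The only cosmetic difference is in that tight construction: you build each $G_e^\star$ as a tree formed from spanning trees on the traces joined by $r-1$ bridges, whereas the paper (in the proof of Lemma~\ref{lem:HyperGraphConnection2}) uses the path graph $G_{H,\prec}$ coming from a total order $\prec$ compatible with $\mathscr{P}$ --- both yield exactly $r(e;\mathscr{P})-1$ crossing edges per hyperedge, so the arguments are interchangeable.
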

Theorem \ref{thm:inherentTau} follows as an easy corollary of Theorem \ref{thm:NWT} and the definition of an inherently $\tau$-connected hypergraph.  However, a stronger version of Theorem \ref{thm:inherentTau} can be distilled from our proof of Lemma \ref{lem:HyperGraphConnection2}, which is stated shortly.  Specifically, we will see that a hypergraph $H$ is inherently $\tau$-connected iff a relatively small subset of induced multigraphs induced by $H$ contain $\tau$ edge-disjoint spanning trees.  For a precise statement,  see our remark following the proof of Lemma \ref{lem:HyperGraphConnection2}. Though not needed elsewhere in this paper, we remark that there is an analogous version of Theorem \ref{thm:NWT} for hypergraphs and partition-connectivity due to Frank, Kir\'{a}ly, and  Kriesell \cite{frank2003decomposing}.
\begin{theorem}\label{thm:PartitionConn}
A hypergraph $H=(\mathcal{V},\mathcal{E})$ can be decomposed into $\tau$ subhypergraphs, each of which is 1-partition connected iff for any partition $\mathscr{P}$ of $\mathcal{V}$ into disjoint sets  $\mathcal{V}_1, \mathcal{V}_2, \dots, \mathcal{V}_{|\mathscr{P}|}$, the number of hyperedges intersecting at least two parts of $\mathscr{P}$ is at least $\tau(|\mathscr{P}|-1)$ (i.e., $H$ is $\tau$-partition connected).
\end{theorem}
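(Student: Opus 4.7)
The plan is to establish the two implications separately: necessity by a short edge-counting argument, and sufficiency by a matroid-union argument paralleling the Nash-Williams--Tutte proof of Theorem \ref{thm:NWT}.

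Necessity is straightforward. Assume $\mathcal{E} = \mathcal{E}_1 \sqcup \cdots \sqcup \mathcal{E}_\tau$ is an edge-disjoint decomposition with each $H_i = (\mathcal{V},\mathcal{E}_i)$ being 1-partition connected. For any partition $\mathscr{P}$ of $\mathcal{V}$, each $\mathcal{E}_i$ contains at least $|\mathscr{P}|-1$ hyperedges meeting two or more parts of $\mathscr{P}$. Summing over $i$ and using edge-disjointness shows that $\mathcal{E}$ contains at least $\tau(|\mathscr{P}|-1)$ such hyperedges, so $H$ is $\tau$-partition connected.

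For the converse, my plan is to introduce a hypergraphic matroid $M_H$ on the ground set $\mathcal{E}$ whose bases are precisely the minimal 1-partition connected spanning subhypergraphs of $H$. The existence of this matroid and an explicit rank function expressible as a minimum over partitions of $\mathcal{V}$ should follow from the submodularity of the hyperedge-crossing count as a function of partitions. Decomposing $\mathcal{E}$ into $\tau$ edge-disjoint 1-partition connected spanning subhypergraphs is then equivalent to packing $\tau$ disjoint bases of $M_H$ into $\mathcal{E}$, with the leftover edges freely redistributed among the color classes. By the matroid base-packing theorem (a standard consequence of matroid union), such a packing exists iff $|\mathcal{E}\setminus \mathcal{F}| \geq \tau(r_{M_H}(\mathcal{E}) - r_{M_H}(\mathcal{F}))$ for every $\mathcal{F}\subseteq \mathcal{E}$. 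Substituting the partition-based expression for $r_{M_H}$ reduces this inequality exactly to the hypothesized $\tau$-partition connectivity of $H$.

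The main obstacle will be establishing the matroid $M_H$: verifying that minimal 1-partition connected spanning edge sets form the bases of a matroid on $\mathcal{E}$, and deriving a tractable rank formula indexed by partitions of $\mathcal{V}$. Once these matroid-theoretic facts are in hand, the matroid-union step and the final algebraic rearrangement matching the base-packing inequality to the partition inequality are routine. An alternative route that avoids any explicit matroid construction is a direct uncrossing argument on the submodular deficiency function measuring by how much the crossing count falls short of $\tau(|\mathscr{P}|-1)$ across all partitions, iteratively recoloring edges of $\mathcal{E}$ with $\tau$ colors until every color class is 1-partition connected, in the spirit of Frank, Kir\'{a}ly, and Kriesell's original approach.
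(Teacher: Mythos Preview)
The paper does not actually prove Theorem~\ref{thm:PartitionConn}; it is merely quoted as a known result due to Frank, Kir\'{a}ly, and Kriesell \cite{frank2003decomposing} and explicitly flagged as ``not needed elsewhere in this paper.'' So there is no in-paper proof to compare against.

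That said, your plan is in line with the original source. The necessity direction is exactly the one-line counting argument you give. For sufficiency, Frank, Kir\'{a}ly, and Kriesell's proof also proceeds via matroids: they show that the edge sets $F\subseteq\mathcal{E}$ for which the partition inequality holds with $\tau=1$ (for all partitions of $\mathcal{V}$) form the spanning sets of a matroid on $\mathcal{E}$ (the \emph{hypergraphic} or \emph{Lorea} matroid), derive the rank formula
\[
r(\mathcal{F}) \;=\; \min_{\mathscr{P}}\Bigl\{\,|\mathscr{P}|-1 \;+\; \bigl|\{\,e\in\mathcal{F}: e \text{ lies in a single part of }\mathscr{P}\,\}\bigr|\Bigr\},
\]
and then apply Edmonds' matroid base-packing theorem. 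Substituting this rank formula into the base-packing condition and taking complements yields precisely the $\tau$-partition connectivity hypothesis, exactly as you outline. The only point where your sketch is slightly loose is in identifying the bases: the bases of this matroid are the minimal \emph{partition-connected} edge sets (equivalently, the ``hyperforests'' of size $|\mathcal{V}|-1$ that are partition-connected), and establishing that these satisfy the base-exchange axiom is the nontrivial step you correctly flag as the main obstacle. Once you have the matroid and its rank formula, the rest is indeed routine.
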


We point out that $\tau$-partition connectivity is a more stringent condition than inherent $\tau$-connectivity, as reflected by Theorems \ref{thm:inherentTau} and \ref{thm:PartitionConn}.

For a hypergraph $H=(\mathcal{V},\mathcal{E})$, an edge set $\mathcal{E}'\subseteq \mathcal{E}$ is a minimal inherently $\tau$-connected edge-set if the subhypergraph $H'=(\mathcal{V},\mathcal{E}')$ is inherently $\tau$-connected, and the removal of any edge from $\mathcal{E}'$ results in a subhypergraph that is not inherently $\tau$-connected. Further, define
\begin{align}
\varrho_{\tau}(H) = \min \Big\{ \left|\mathcal{E}'\right| : \mathcal{E}'\subseteq \mathcal{E} \mbox{~~is an inherently $\tau$-connected edge-set} \Big\}.
\end{align}
In other words, $\varrho_{\tau}(H)$ is the minimum number of edges in an inherently $\tau$-connected subhypergraph $H'=(\mathcal{V},\mathcal{E}')$ of $H=(\mathcal{V},\mathcal{E})$.  Note that $\varrho_{\tau}(H)$ is the minimum number of edges in a connected dominating edge set when $\tau=1$, and thus its computation is \textsf{NP}-hard in general.

\begin{lemma}\label{lem:HyperGraphConnection2}
Let  $H=(\mathcal{V},\mathcal{E})$ be the hypergraph representation of $\{\mathcal{I}_j\}$.  $H$ is inherently $\tau$-connected if and only if
\begin{align}
\mathsf{M}^{\star}\!\left(\{\mathcal{I}_j\}\right) \leq \left|\cup_j \mathcal{I}_j \right| -\tau.
\end{align}
In particular, $\{\mathcal{I}_j\}$ is $\tau$-critical if and only if $\mathcal{E}$ is a minimal inherently $\tau$-connected edge-set.
\end{lemma}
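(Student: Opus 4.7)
The plan is to follow the template of the proof of Lemma \ref{lem:HyperGraphConnection} while substituting inherent $\tau$-connectivity for ordinary connectivity and invoking Theorem \ref{thm:inherentTau} in place of the bare partition argument used there. Write $m = |\cup_j \mathcal{I}_j|$ throughout.

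For the direction ``$\mathsf{M}^{\star}(\{\mathcal{I}_j\}) \leq m-\tau$ $\Rightarrow$ $H$ inherently $\tau$-connected,'' I would argue the contrapositive. Assume $H$ is not inherently $\tau$-connected; by Theorem \ref{thm:inherentTau} there is a partition $\mathscr{P} = \{\mathcal{V}_1, \ldots, \mathcal{V}_p\}$ of $\mathcal{V}$ with $\sum_{e\in\mathcal{E}}(r(e;\mathscr{P}) - 1) < \tau(p-1)$ (necessarily $p\geq 2$). For each $i \in [p]$, instantiate the ILP constraint of \eqref{ILP} at $\mathcal{S}_i = \{j : c_j \notin \mathcal{V}_i\}$; by the De Morgan manipulation used in Lemma \ref{lem:HyperGraphConnection} the right-hand side reduces to the number of hyperedges $e$ with $e \cap \mathcal{V}_i = \emptyset$. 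Summing the $p$ inequalities, the left-hand side equals $(p-1)\sum_j a_j$ (each $j$ is counted in exactly $p-1$ of the sets $\mathcal{S}_i$), while the right-hand side equals $\sum_e (p - r(e;\mathscr{P})) = pm - \sum_e r(e;\mathscr{P})$. The partition inequality then yields $pm - \sum_e r(e;\mathscr{P}) > (p-1)(m-\tau)$, and integrality of the $a_j$ forces $\mathsf{M}^{\star}(\{\mathcal{I}_j\}) \geq m-\tau+1$, proving the contrapositive.

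For the reverse direction, it is enough to exhibit a protocol that generates $\tau$ independent secret keys, since by Theorem \ref{thm:secrecyCapacityTau} this immediately gives $m \geq \mathsf{M}^{\star}(\{\mathcal{I}_j\}) + \tau$. Assuming $H$ is inherently $\tau$-connected, pick any induced multigraph $G$ of $H$ (for concreteness, $G_{H,\prec}$ as in Figure \ref{fig:Gorder}); by definition, $G$ contains $\tau$ edge-disjoint spanning trees $T_1, \ldots, T_\tau$. I would generalize the single-SK protocol from Lemma \ref{lem:HyperGraphConnection} by using each spanning tree $T_k$ to disseminate one independent seed $Y_k \in \mathbb{F}$ throughout the network: each tree-edge of $T_k$ lies inside some simple component $G_e$ (hence inside the hyperedge $e$), and the associated transmission takes the form ``$Y_k + X_e$,'' broadcast by a client that holds $X_e$. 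Edge-disjointness of the $T_k$ keeps the joint seed $(Y_1, \ldots, Y_\tau)$ uniform on $\mathbb{F}^\tau$ and independent of the transmission vector (the distributional-equivalence argument $(X_1, X_1+X_2, \ldots) \overset{d}{=} (X_1, X_2, \ldots)$ from the single-SK proof applies tree-by-tree), while along-tree decoding allows every client to recover every $Y_k$.

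The ``in particular'' assertion is then immediate. By definition, $\{\mathcal{I}_j\}$ is $\tau$-critical iff $\mathsf{M}^{\star}(\{\mathcal{I}_j\}) = m - \tau$ and $\mathsf{M}^{\star}(\{\mathcal{I}_j - e\}) = m - \tau = (m-1) - (\tau-1)$ for every $e \in \mathcal{E}$. The main equivalence says the former is equivalent to $H$ being inherently $\tau$-connected; combined with Lemma \ref{lem:monotone}, the latter is equivalent to $H - e$ being inherently $(\tau-1)$-connected but \emph{not} inherently $\tau$-connected for every $e$, i.e., $\mathcal{E}$ is a minimal inherently $\tau$-connected edge-set. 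The main obstacle in the overall argument is the explicit construction in the sufficiency step: because the $\tau$ spanning trees may reuse the same hyperedge of $H$ across different components, some care is required to coordinate the seed assignments and the linear coefficients across trees so that no combination of transmissions leaks information about any individual $Y_k$ to the eavesdropper.
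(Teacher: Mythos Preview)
Your contrapositive for the direction ``$\mathsf{M}^{\star}\leq m-\tau \Rightarrow H$ inherently $\tau$-connected'' is correct; it is the paper's own chain of inequalities read backward, with Theorem~\ref{thm:inherentTau} replacing the direct appeal to Theorem~\ref{thm:NWT}. The ``in particular'' clause is also fine once the main equivalence is in hand.

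The reverse direction, however, has a genuine gap, and the obstacle you flag at the end is not cosmetic. Edge-disjointness of $T_1,\dots,T_\tau$ in the induced multigraph $G$ does \emph{not} imply that the trees draw on disjoint messages: two distinct multigraph edges lying in the same component $G_e$ both carry the single random variable $X_e$. Take $n=3$ with $\mathcal{I}_1=\{1,2\}$, $\mathcal{I}_2=\{1,2,3\}$, $\mathcal{I}_3=\{1,3\}$, and let $G_{e_1}$ be the path $c_1\!-\!c_2\!-\!c_3$. Any pair of edge-disjoint spanning trees in $G$ must each use one edge of $G_{e_1}$, so both trees involve $X_1$. Your template then produces transmissions of the form $(X_1+X_3,\,X_1+X_2)$ with seeds drawn from $\{X_1,X_2,X_3\}$; for every choice of seeds $(Y_1,Y_2)$ the pair is supported on a one-dimensional coset given the transmissions, so $I((Y_1,Y_2);\mathbf{T})>0$. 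No reshuffling of roots or coefficients inside this template rescues it, because the two trees together touch only three messages while you are asking for two independent field elements on top of two published linear forms.

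The paper avoids protocol construction altogether for this direction. It invokes the polymatroidal fact that an optimal ILP solution $(a_j^\star)$ admits a \emph{tight} partition $\mathscr{P}^\star$ with $\sum_i\sum_{j\in\bar{\mathcal V}_i}a_j^\star=\sum_i|\cap_{j\in\mathcal V_i}\bar{\mathcal I}_j|$, and then applies inherent $\tau$-connectivity to the specific induced multigraph $G_{H,\prec}$ for an order $\prec$ compatible with $\mathscr{P}^\star$, where the crossing count equals $\sum_e(r(e;\mathscr{P}^\star)-1)$ \emph{exactly}. This converts the same arithmetic you used in the forward direction into an upper bound on $\mathsf{M}^\star$. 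If you want to push a protocol-based argument through, you would need to exhibit $\tau$ secret keys whose joint distribution is genuinely full-rank over $\mathbb{F}^\tau$ modulo the transmissions even when hyperedges are shared across trees; the spanning-tree heuristic alone does not deliver this.
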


Before we begin the proof of Lemma \ref{lem:HyperGraphConnection2}, we take a moment to describe a special class of multigraphs that are induced by $H$.  %
For a hypergraph $H=(\mathcal{V},\mathcal{E})$, let $\prec$ be a strict total order on $\mathcal{V}$.  That is, if $\mathcal{V}=\{v_1,v_2, \dots, v_n\}$, there is a permutation $\pi$ on $\{1,\dots,n\}$ %
for which  $v_{\pi(1)} \prec v_{\pi(2)} \prec \cdots \prec v_{\pi(n)}$.  
Define the multigraph $G_{H,\prec}$ induced by $H$, with decomposition $\{G_e\}_{e\in\mathcal{E}}$, as follows:
For each $e\in\mathcal{E}$, let $G_e$ be a path that connects the vertices contained in $e$ in ascending order (with respect to $\prec$).  In other words, if $e=\{v_{i_1}, v_{i_2}, \dots, v_{i_k}\}$, where $v_{i_j}\prec v_{i_{\ell}}$ for $i_j < i_{\ell}$, then the edge-set of $G_e$ is precisely $\{v_{i_1},v_{i_2}\}, \{v_{i_2},v_{i_3}\}, \dots, \{v_{i_{k-1}},v_{i_{k}}\}$.
An example is shown in Figure \ref{fig:Gorder}.

\begin{proof}[Proof of Lemma \ref{lem:HyperGraphConnection2}]
Let $(a_1^{\star}, \dots, a_n^{\star})$ be an optimal solution to ILP \eqref{ILP}.  First, suppose $\mathsf{M}^{\star}\!\left(\{\mathcal{I}_j\}\right) \leq \left|\cup_j \mathcal{I}_j \right| -\tau$.  Then, for any partition $\mathscr{P} = \{\mathcal{V}_1, \mathcal{V}_2, \dots, \mathcal{V}_k\}$ of $\mathcal{V}$, we have:
\begin{align}
(\left|\cup_j \mathcal{I}_j \right| -\tau)(k-1) & \geq \mathsf{M}^{\star}\!\left(\{\mathcal{I}_j\}\right)(k-1)\\
&=\sum_{i=1}^k \left( \mathsf{M}^{\star}\!\left(\{\mathcal{I}_j\}\right) - \sum_{j\in \mathcal{V}_i} a_j^{\star} \right)\\
&=\sum_{i=1}^k  \sum_{j\in \bar{\mathcal{V}}_i} a_j^{\star} \\
&\geq \sum_{i=1}^k  \left|\bigcap_{j\in {\mathcal{V}_i} } \bar{\mathcal{I}_j} \right| \label{ILPineqStar}\\
&=k \left|\cup_j \mathcal{I}_j \right| - \sum_{i=1}^k \left|\bigcup_{j\in {\mathcal{V}_i} } {\mathcal{I}_j} \right|,
\end{align}
where \eqref{ILPineqStar} follows by feasibility of $(a_1^{\star}, \dots, a_n^{\star})$ for ILP \eqref{ILP}.
Rearranging, we find
\begin{align}
\sum_{i=1}^k \left|\bigcup_{j\in {\mathcal{V}_i} } {\mathcal{I}_j} \right| \geq \left|\cup_j \mathcal{I}_j \right| +\tau (k-1). \label{tauSpanIneq}
\end{align}
Now, let $G$ be an arbitrary multigraph induced by $H$ with decomposition  given by $\{G_e \}_{e \in \mathcal{E}}$. Note that if $e \in \mathcal{E}$ intersects $r(e\,;\mathscr{P})$ parts of the partition $\mathscr{P}$, then at least $r(e\,;\mathscr{P})-1$ edges of $G_e$  cross the partition $\mathscr{P}$.  Therefore,
\begin{align}
\Omega \left(G,\mathscr{P}\right) \geq \sum_{e\in \mathcal{E}} (r(e\,;\mathscr{P})-1) = \left(\sum_{i=1}^k \left|\bigcup_{j\in {\mathcal{V}_i} } {\mathcal{I}_j} \right|\right) - \left|\cup_j \mathcal{I}_j \right|,
\end{align}
where $\Omega\left(G,\mathscr{P}\right)$ denotes  the number of edges in $G$ that cross the partition $\mathscr{P}$. 
Since the partition $\mathscr{P}$ and induced multigraph $G$ were arbitrary, it follows from Theorem \ref{thm:NWT} and \eqref{tauSpanIneq} that  $H$ is inherently $\tau$-connected. Thus, we have shown:
\begin{align}
\mathsf{M}^{\star}\!\left(\{\mathcal{I}_j\}\right) \leq \left|\cup_j \mathcal{I}_j \right| -\tau~~\Longrightarrow~~ H \mbox{~is inherently $\tau$-connected.}
\end{align}

Next suppose $H$ is inherently $\tau$-connected.  By optimality of $(a_1^{\star}, \dots, a_n^{\star})$, there exists a partition $\mathscr{P}^{\star} = \{\mathcal{V}_1, \mathcal{V}_2, \dots, \mathcal{V}_k\}$ of $\mathcal{V}$ (see \cite[Appendix A]{bib:CourtadeWesel2013}, \cite{bib:Schrijver2003}) such that 
\begin{align}
\sum_{i=1}^k  \sum_{j\in \bar{\mathcal{V}}_i} a_j^{\star} = \sum_{i=1}^k  \left|\bigcap_{j\in {\mathcal{V}_i} } \bar{\mathcal{I}_j} \right|.
\end{align}
Now, consider an arbitrary order $\prec$ on $\mathcal{V}$ which satisfies $u \prec v$ if $u \in \mathcal{V}_i$, $v \in \mathcal{V}_j$ and $i < j$.  In this case, if $e \in \mathcal{E}$ intersects $r(e\,;\mathscr{P}^{\star})$ parts of the partition $\mathscr{P}^{\star}$, then the path in $G_{H,\prec}$ generated by the hyperedge  $e$ (i.e., $G_e$) will have precisely $r(e\,;\mathscr{P}^{\star})-1$ edges that cross $\mathscr{P}^{\star}$.  Since $H$ is inherently $\tau$-connected, we have
\begin{align}
\left(\sum_{i=1}^k \left|\bigcup_{j\in {\mathcal{V}_i} } {\mathcal{I}_j} \right|\right) - \left|\cup_j \mathcal{I}_j \right| = \sum_{e\in \mathcal{E}} (r(e\,;\mathscr{P}^{\star})-1) = \Omega \left(G_{H,\prec},\mathscr{P}^{\star}\right) \geq \tau(k-1) \label{hSpanEqn}
\end{align}
by Theorem \ref{thm:NWT}.  Proceeding in a fashion similar to before, we have for $\mathscr{P}^{\star}$ that
\begin{align}
\mathsf{M}^{\star}\!\left(\{\mathcal{I}_j\}\right)(k-1)
&=\sum_{i=1}^k \left( \mathsf{M}^{\star}\!\left(\{\mathcal{I}_j\}\right) - \sum_{j\in \mathcal{V}_i} a_j^{\star} \right)\\
&=\sum_{i=1}^k  \sum_{j\in \bar{\mathcal{V}}_i} a_j^{\star} \\
&= \sum_{i=1}^k  \left|\bigcap_{j\in {\mathcal{V}_i} } \bar{\mathcal{I}_j} \right| \\
&=k \left|\cup_j \mathcal{I}_j \right| - \sum_{i=1}^k \left|\bigcup_{j\in {\mathcal{V}_i} } {\mathcal{I}_j} \right|\\
&\leq (k-1)(\left|\cup_j \mathcal{I}_j \right| -\tau),
\end{align}
where the final inequality follows from \eqref{hSpanEqn}.  Hence, 
\begin{align}
\mathsf{M}^{\star}\!\left(\{\mathcal{I}_j\}\right) \leq \left|\cup_j \mathcal{I}_j \right| -\tau ~~\Longleftrightarrow~~ H \mbox{~is inherently $\tau$-connected.}\label{IFFstatement}
\end{align}

We now prove the second claim.  
To this end, suppose $\{\mathcal{I}_j\}$ is $\tau$-critical.  Then $\mathsf{M}^{\star}\!\left(\{\mathcal{I}_j\}\right) = \left|\cup_j \mathcal{I}_j \right|-\tau$, which implies $H$ is inherently $\tau$-connected by \eqref{IFFstatement}.  Consider the subhypergraph $H'=(\mathcal{V},\mathcal{E}\backslash\{e\})$, which corresponds to the subfamily $\{\mathcal{I}_j-e\}$ of $\{\mathcal{I}_j\}$.  Since $\{\mathcal{I}_j\}$ is $\tau$-critical, we must have $\mathsf{M}^{\star}\!\left(\{\mathcal{I}_j-e\}\right) = \mathsf{M}^{\star}\!\left(\{\mathcal{I}_j\}\right)  = \left|\cup_j \mathcal{I}_j \right|-\tau  = \left|\cup_j (\mathcal{I}_j-e) \right|  - \tau +1$.  By \eqref{IFFstatement}, $H'$ cannot be inherently $\tau$-connected, and therefore $\mathcal{E}$  is a minimal inherently $\tau$-connected edge-set.

On the other hand, suppose $\mathcal{E}$  is a minimal inherently $\tau$-connected edge-set.  Then, \eqref{IFFstatement} implies
\begin{align}
\mathsf{M}^{\star}\!\left(\{\mathcal{I}_j\}\right) \leq \left|\cup_j \mathcal{I}_j \right|-\tau.
\end{align}
Since $\mathcal{E}$  is a inherently $\tau$-connected edge-set, for any $e\in \mathcal{E}$, $H'=(\mathcal{V},\mathcal{E}\backslash\{e\})$ is not inherently $\tau$-connected, and  \eqref{IFFstatement} implies
\begin{align}
\mathsf{M}^{\star}\!\left(\{\mathcal{I}_j-e\}\right) \geq \left|\cup_j (\mathcal{I}_j-e) \right| -\tau+1 =\left|\cup_j \mathcal{I}_j \right|-\tau.
\end{align}
Applying Lemma \ref{lem:monotone}, we must have $\mathsf{M}^{\star}\!\left(\{\mathcal{I}_j\}\right)=\mathsf{M}^{\star}\!\left(\{\mathcal{I}_j-e\}\right)$, and  $\left|\cup_j \mathcal{I}_j \right| - \mathsf{M}^{\star}\!\left(\{\mathcal{I}_j\}\right)=\tau$, which implies $\{\mathcal{I}_j\}$ is $\tau$-critical. 
 \end{proof}
\begin{remark}
From the proof of Lemma \ref{lem:HyperGraphConnection2}, we observe  that a hypergraph $H$ is inherently $\tau$-connected if and only if $G_{H,\prec}$ contains $\tau$ edge-disjoint spanning trees for every strict order $\prec$.  Hence, this apparently weaker condition is, in fact,  necessary and sufficient  for {any} multigraph induced by $H$ to contain $\tau$ edge-disjoint spanning trees.
\end{remark}

In summary, we have found the following characterization of $\mathsf{S}^{(\tau)}_{\mathrm{L}}\!\left(\{\mathcal{I}_j\} \right)$:
\begin{theorem}\label{bigThm}
If $H$ is the hypergraph representation of the network defined by $\{\mathcal{I}_j\}$, then
\begin{align}
\mathsf{S}^{(\tau)}_{\mathrm{L}}\!\left(\{\mathcal{I}_j\} \right) = \varrho_{\tau}(H) - \tau.
\end{align}
\end{theorem}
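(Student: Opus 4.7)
The plan is to combine the two ingredients developed just before Theorem~\ref{bigThm}: the $\tau$-generalization of Theorem~\ref{thm:SLminimumOneCritical}, which gives
\begin{equation*}
\mathsf{S}^{(\tau)}_{\mathrm{L}}\!\left(\{\mathcal{I}_j\} \right) = \left|\cup_j \mathcal{J}^{\star}_j \right| - \tau
\end{equation*}
for any minimum $\tau$-critical subfamily $\{\mathcal{J}_j^{\star}\}$ of $\{\mathcal{I}_j\}$, and Lemma~\ref{lem:HyperGraphConnection2}, which identifies $\tau$-critical subfamilies with subhypergraphs whose edge-sets are minimal inherently $\tau$-connected. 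It therefore suffices to establish $\left|\cup_j \mathcal{J}^\star_j\right| = \varrho_\tau(H)$.

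The first step is to record the natural bijection between subfamilies of $\{\mathcal{I}_j\}$ and edge-subsets of $H$. Since the edges of $H$ are in one-to-one correspondence with the messages in $\cup_j \mathcal{I}_j$, any $\mathcal{E}' \subseteq \mathcal{E}$ produces a subfamily $\mathcal{J}_j \triangleq \mathcal{I}_j \cap \mathcal{E}'$ whose hypergraph representation is precisely $H' = (\mathcal{V},\mathcal{E}')$, and conversely. Under this bijection, $\left|\cup_j \mathcal{J}_j\right| = |\mathcal{E}'|$.

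The second step is to invoke Lemma~\ref{lem:HyperGraphConnection2} applied to each $\{\mathcal{J}_j\}$: the subfamily is $\tau$-critical iff $\mathcal{E}'$ is a minimal inherently $\tau$-connected edge-set. Hence minimizing $\left|\cup_j \mathcal{J}_j\right|$ over $\tau$-critical subfamilies of $\{\mathcal{I}_j\}$ equals minimizing $|\mathcal{E}'|$ over minimal inherently $\tau$-connected edge-subsets of $\mathcal{E}$. This minimum in turn coincides with $\varrho_\tau(H)$: every inherently $\tau$-connected $\mathcal{E}'$ contains a minimal inherently $\tau$-connected subset of no larger cardinality (greedily delete removable edges), so passing from inherently $\tau$-connected to \emph{minimal} inherently $\tau$-connected cannot change the minimum.

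There is no serious obstacle: essentially all the work was already done in Lemma~\ref{lem:HyperGraphConnection2} and the $\tau$-analogue of Theorem~\ref{thm:SLminimumOneCritical}, and what remains is a translation between the subfamily and edge-subset viewpoints. For completeness I would dispense with the degenerate case as follows: if $H$ is not inherently $\tau$-connected, then by Lemma~\ref{lem:HyperGraphConnection2} we have $\mathsf{M}^\star(\{\mathcal{I}_j\}) > |\cup_j \mathcal{I}_j| - \tau$, so Theorem~\ref{thm:secrecyCapacityTau} guarantees that no protocol generates $\tau$ SKs; moreover, no edge-subset $\mathcal{E}' \subseteq \mathcal{E}$ can be inherently $\tau$-connected (removing edges only weakens spanning-tree packing in every induced multigraph), so $\varrho_\tau(H) = +\infty$. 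Thus both sides equal $+\infty$ by the convention stated after \eqref{SLdefn}, and the identity $\mathsf{S}^{(\tau)}_{\mathrm{L}}(\{\mathcal{I}_j\}) = \varrho_\tau(H) - \tau$ holds in all cases.
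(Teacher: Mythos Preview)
Your proposal is correct and follows exactly the approach the paper intends: the paper presents Theorem~\ref{bigThm} as a summary (``In summary, we have found the following characterization\ldots'') obtained by combining the $\tau$-generalization of Theorem~\ref{thm:SLminimumOneCritical} with Lemma~\ref{lem:HyperGraphConnection2}, and you have simply filled in the bijection between subfamilies and edge-subsets together with the observation that minimizing over minimal inherently $\tau$-connected edge-sets agrees with minimizing over all inherently $\tau$-connected edge-sets.
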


When we restrict ourselves to linear protocols, Theorem \ref{bigThm} elucidates a direct correspondence between the number of public transmissions required to generate $\tau$ SKs in a network and the inherent $\tau$-connectivity of the representative hypergraph. As an illustrative example, consider the following network with 15 clients: 
\begin{example}\label{example}
 Let $\mathcal{I}_1 = \{5,7,10,11,13,14,15\}$, and let $\{ \mathcal{I}_j \}_{j=1}^{15}$ be the 14 different cyclic shifts of $\mathcal{I}_1$ (e.g., $\mathcal{I}_2=\{1,6,8,11,12,14,15\}$, $\mathcal{I}_3=\{1,2,7,9,12,13,15\}$, $\dots$).  Since the number of messages $m=15$ is modestly small, we are able to compute $\varrho_{\tau}(H)$ explicitly for the hypergraph representation of the network defined by $\{\mathcal{I}_j\}$, and therefore also $\mathsf{S}^{(\tau)}_{\mathrm{L}}\!\left(\{\mathcal{I}_j\} \right)$ by invoking Theorem \ref{bigThm}.  Below, Table \ref{tab:Stau} gives $\mathsf{S}^{(\tau)}_{\mathrm{L}}\!\left(\{\mathcal{I}_j\} \right)$ for $\tau \geq 1$:

\begin{table}[h!]
\begin{center}
\begin{tabular}{c  || *{6}{c}r}
$\tau$              & 1 & 2 & 3 & 4 & 5  & 6 & $\geq 7$ \\
\hline
$\mathsf{S}^{(\tau)}_{\mathrm{L}}\!\left(\{\mathcal{I}_j\} \right)$  & 2 & 4 & 4& 6 & 8 & 8 & $\infty$ %
\end{tabular}\caption{$\mathsf{S}^{(\tau)}_{\mathrm{L}}\!\left(\{\mathcal{I}_j\} \right)$ vs. $\tau$ for the network given in Example \ref{example}.  Note that $\mathsf{S}^{(\tau)}_{\mathrm{L}}\!\left(\{\mathcal{I}_j\} \right)=\infty$ indicates that it is not possible to generate $\tau$ secret keys with any number of transmissions.}\label{tab:Stau}
\end{center}
\end{table}
\end{example}

In another example\footnote{We remark that this generalizes a very recent result due to Mukherjee and Kashyap \cite{bib:MukherjeeKashyap2014}.}, we give a complete characterization for $\mathsf{S}^{(\tau)}_{\mathrm{L}}\!\left(\{\mathcal{I}_j\} \right)$ when each pair of clients shares a unique message (i.e., $m={n \choose 2}$, and the hypergraph representation of the network defined by $\{\mathcal{I}_j\}$ is a complete (simple) graph on $n$ vertices).  This network model was  called the PIN model by Nitinawarat and Narayan \cite{bib:NitinawaratNarayan2010}.
\begin{example}
In the PIN model, $\mathsf{S}^{(\tau)}_{\mathrm{L}}\!\left(\{\mathcal{I}_j\} \right) = \tau(n-2)$, where $1\leq \tau \leq \lfloor n/2\rfloor$.  Indeed, a simple graph is inherently $\tau$-connected iff it contains $\tau$ edge-disjoint spanning trees by Theorem \ref{thm:NWT}.  Thus, a simple counting argument gives $\varrho_{\tau}(H) =\tau(n-1)$.  An application of Theorem \ref{bigThm} proves the claim.
\end{example}

It is an interesting combinatorial design problem to specify ideal message distributions amongst clients (subject to constraints) that  allow secret-key generation with fewest transmissions.  For example, how many transmissions are required to generate a SK subject to the constraint that each message is initially held by at most $t$ clients?  This general problem is beyond the scope of the present paper, and we leave it for future work.

 \section{Concluding Remarks}\label{sec:conclusion}
 
In this paper, we have completely characterized the number of public transmissions required to generate a specified number of SKs when linear transmission protocols are employed.  The minimum number of transmissions required by a linear protocol to generate $\tau$ secret keys is succinctly given in terms of the inherent $\tau$-connectivity of hypergraph naturally associated with the network.  We have also shown that computing said minimum number of transmissions is \textsf{NP}-hard.

Moreover, we have established that there can be a gap between the number of transmissions required by a nonlinear transmission scheme and the number of transmissions required by the best linear transmission scheme, and that this gap can be arbitrarily large. The problem of characterizing the number of public  transmissions required by a nonlinear  scheme remains an open problem, and appears to be very challenging.

\bibliographystyle{IEEEtran}
\bibliography{sharingUR}

\end{document}